\tikzset{
	photon/.style={decorate, decoration={snake}, draw=red},
	electron/.style={draw=blue, postaction={decorate},
		decoration={markings,mark=at position .55 with {\arrow[draw=blue]{>}}}},
	gluon/.style={decorate, draw=magenta,
		decoration={coil,amplitude=4pt, segment length=5pt}},
	sderiv/.style={postaction={decorate},
		decoration={markings,mark=at position .3 with {\arrow{>}}}},
	tderiv/.style={postaction={decorate},
		decoration={markings,mark=at position .7 with {\arrow{<}}}},
	stderiv/.style={postaction={decorate},
		decoration={markings,mark=at position .7 with {\arrow{<}},mark=at position .3 with {\arrow{>}}}}
}
\definecolor{see}{RGB}{67,75,179}
\definecolor{darksee}{RGB}{42,44,148}
\definecolor{honey}{RGB}{232,180,129}
\definecolor{lighthoney}{RGB}{255,254,220}
\definecolor{citecol}{rgb}{0.5,0,0} 
\definecolor{blue1}{RGB}{130,150,209}
\definecolor{see}{RGB}{67,75,179}
\newcommand{\no}[1]{\mathop{\mathopen: {#1} \mathclose:}}
\newcommand{\fA}{\mathfrak{A}}
\newcommand{\fP}{\mathfrak{P}}
\newcommand{\HH}{\mathcal{H}}
\newcommand{\Gcal}{\mathcal{G}}  
\newcommand{\Lcal}{\mathcal {L}}
\newcommand{\Hcal}{\mathcal {H}}
\newcommand{\Kcal}{\mathcal{K}}  
\newcommand{\Ccal}{\mathcal{C}}
\newcommand{\Dcal}{\mathcal{D}}
\newcommand{\Ecal}{\mathcal{E}} 
\newcommand{\Fcal}{\mathcal{F}}
\newcommand{\Mcal}{\mathcal{M}}
\newcommand{\Scal}{\mathcal{S}}
\newcommand{\Tcal}{\mathcal{T}}
\newcommand{\Wcal}{\mathcal{W}}
\newcommand{\Xcal}{\mathcal{X}}
\newcommand{\Ci}{\mathcal{C}^\infty} 
\newcommand{\link}{\prec\!\!*\,\,}
\newcommand{\loc}{\mathrm{loc}}
\newcommand{\reg}{\mathrm{reg}}
\newcommand{\pol}{\mathrm{pol}}
\newcommand{\vx}{\underline{x}}
\newcommand{\vy}{\underline{y}}
\newcommand{\NN}{\mathbb{N}}          
\newcommand{\ZZ}{\mathbbmss{Z}}     
\newcommand{\RR}{\mathbb{R}}           
\newcommand{\CC}{\mathbb{C}}           
\newcommand{\M}{\mathbb{M}} 	     
\newcommand{\al}{\alpha}
\newcommand{\la}{\lambda}
\newcommand{\ph}{\varphi}
\newcommand{\T}{\cdot_{{}^\Tcal}}
\newcommand{\TT}{\Tcal}
\newcommand{\sst}[1]{\scriptscriptstyle{#1}}  
\newcommand{\1}{\mathds{1}}                         
\newcommand{\be}{\begin{equation}}
\newcommand{\ee}{\end{equation}}
\DeclareMathOperator{\supp}{supp}      
\def\normOrd#1{\mathop{:}\nolimits\!#1\!\mathop{:}\nolimits}
\DeclareMathOperator{\rk}{rk}
\newcommand{\starb}{\mathbin{\star_{\sst \TT}}}
\DeclareMathOperator{\Aut}{Aut}
\newcommand{\abs}[1]{\lvert#1\rvert}
\newcommand{\acts}[1]{\overset{\twoheadrightarrow}{#1}}
\newcommand{\starbint}{\star_{\sst{\TT,\mathrm{int}}}}
\newcommand{\starHint}{\star_{\sst{H,\mathrm{int}}}}
\DeclareMathOperator*{\mean}{Mean}
\theoremstyle{plain}
\newtheorem{thm}{Theorem}[section]
\newtheorem{prop}[thm]{Proposition}
\newtheorem{cor}[thm]{Corollary}
\newtheorem{lemma}[thm]{Lemma}
\theoremstyle{definition}
\newtheorem{df}[thm]{Definition}
\theoremstyle{remark}
\newtheorem{rem}[thm]{Remark}
\newtheorem{exa}[thm]{Example}
\begin{document}		
	
	\title{Algebraic Classical and Quantum Field Theory  on Causal Sets}
	\author{Edmund Dable-Heath\thanks{Current address: Department of Electrical and Electronic Engineering, Imperial College London, South Kensington, London, SW7 2AZ, United Kingdom}}\email{ e.dable-heath18@imperial.ac.uk}
	\author{Christopher J.~Fewster}\email{chris.fewster@york.ac.uk}
	\author{Kasia Rejzner}\email{kasia.rejzner@york.ac.uk}
	\author{Nick Woods\thanks{Current address: TCM Group, Cavendish Laboratory, 19 JJ Thomson Avenue, Cambridge, CB3 0HE,  United Kingdom}}\email{nw361@cam.ac.uk}
	\affiliation{Department of Mathematics, University of York, Heslington, York YO10 5DD, United Kingdom}  
	\date{\today}
	\begin{abstract}
The framework of perturbative algebraic quantum field theory (pAQFT) is used to construct QFT models on causal sets. We discuss various discretised wave operators, including a new proposal based on the idea of a `preferred past', which we also introduce, and show how they may be used to construct classical free and interacting field theory models on a fixed causal set; additionally, we describe how the sensitivity of observables to changes in the background causal set may be encapsulated in a relative Cauchy evolution. These structures are used as the basis of a deformation quantization, using the methods of pAQFT. The SJ state is defined and discussed as a particular quantum state on the free quantum theory. Finally, using the framework of pAQFT, we construct interacting models for arbitrary interactions that are smooth functions of the field configurations. This is the first construction of such a wide class of models achieved in QFT on causal sets.
	\end{abstract}

	\maketitle	
	\tableofcontents

\section{Introduction}
Presently, our understanding of nature is split into two domains: one theory applies to quantum phenomena, and is relevant on the small scale; and a very different theory applies to gravity, space, and time, and is important for large-scale phenomena. Quantum gravity seeks to unify these two into one single description of nature. 
Whilst many attempts have been made, and various methods suggested, the problem of finding the unified theory of quantum gravity still remains open.  One of the fundamental conceptual problems that any such theory has to address is the understanding of the nature of space-time at small scales and the interplay of geometry and quantum phenomena.

This paper brings together two frameworks that have been used to develop theories that combine quantum effects and geometry. 
 The first, causal set theory \cite{Sorkin09,SorkinNotes,Henson09}, is based on the idea that the spacetime that we observe is not fundamental, but rather emergent from a discrete underlying structure. It is conjectured that in the small scale, spacetime is a discrete set of points and the only structure on this set is a partial order relation, interpreted as the causal structure.
 
  The second framework is that of algebraic quantum field theory (AQFT) \cite{HK,Haag} (see~\cite{FewRej2019} for a recent pedagogical introduction), and its generalization to curved spacetimes: locally covariant quantum field theory (LCQFT) \cite{BFV,HW} (see also \cite{FewVerchReview} for review) and perturbative algebraic quantum field theory (pAQFT) \cite{BF0,BDF,DF02,DFloop} (see also \cite{Book} for review). In LCQFT a model is defined by the assignment of 
topological $*$-algebras (often $C^*$-algebras) to globally hyperbolic spacetimes and algebra morphisms to causal embeddings of spacetimes. This assignment has to satisfy a number of axioms that generalize the Haag-Kastler axioms. In pAQFT, these topological *-algebras are formal power series in $\hbar$ and the coupling constant $\lambda$.

In this paper we apply LCQFT and pAQFT methods to QFT on causal sets. This brings benefit to both causal set theory and AQFT. In the first instance, the methods of pAQFT have been successfully applied to construct interacting QFT models in the continuum and now we use the same framework to construct interacting QFT models on causal sets. To our best knowledge, this is the first instance, where the general framework for introducing interaction in causal set theories has been proposed.

On the AQFT side, studying the discrete models allows one to avoid many of the technical difficulties related to UV divergences and study in detail the purely algebraic aspects of pAQFT and how the topology change affects LCQFT.

The main advantage of the algebraic framework is that many of the concepts used in the continuum translate very straightforwardly to the discrete case. For example, instead of assigning algebras to spacetimes, we assign algebras to causal sets. To follow the spirit of pAQFT, we start by defining the classical field theory on a causal set and then deform it using a simple formal deformation quantization prescription. The problem of defining classical dynamics on causal sets is, in our opinion, of interest on its own, since the usual canonical formalism does not apply in this situation. Instead, we use a variant of the Peierls prescription \cite{Pei} that allows us to introduce a Poisson bracket on the space of observables. We also show how to introduce interactions in this framework (following \cite{DF02}) using classical M{\o}ller operators. This is covered in section 3. For our constructions to work, we need to define, on a given causal set, the retarded (or advanced) Green function for the discretized field equation we consider. 
The retarded Green function is also a starting point in the approach of  \cite{Sorkin}.
We discuss various choices for discretization of the wave equation and for construction of Green functions. These include one \cite{Sorkin09,BDD10,DG13,Glaser_2014} which works well for sprinklings (locally finite subsets of Lorentzian manifolds, constructed by randomly selecting points from a given manifold using a Poisson distribution) \cite{Sorkin09} and the continuum limit is achieved by an averaging procedure.\footnote{Although the expectation value of the discretized field equation converges in mean, the variances diverge unless further nonlocal corrections are applied~\cite{Sorkin09}; see~\cite{Belenchia2015} for quantisations of such models.}  Another choice is based on an additional `preferred past structure', which we introduce in this work. It works well on a regular diamond lattice, for example.

After this paper was completed, our attention was drawn to the interesting paper \cite{FJ04} in which discrete d'Alembertians are formulated and the corresponding free theories quantised using the broad methodology of \cite{WaldBook}. The approach taken here is complementary in some respects: ref \cite{FJ04} is concerned with causal sets equipped with a slicing, which does not appear in our approach, but is essential to the definition of the symplectic form given in \cite{FJ04}. By contrast, we follow the spirit of Peierls covariant definition of the Poisson bracket, leading to a quantisation that can be applied to interacting theories. Another interesting contrast is that our use of `preferred past' structures for one of the discrete d'Alembertians considered, is much more local in nature than the global slicing structure of \cite{FJ04}. Nonetheless there are some very close parallels between the resulting discrete equations.

The idea of augmenting causal sets with some extra structure has a precedent in the works of  Cort{\^e}s and Smolin \cite{CS14,CS14a}, where elements of the causal set (events) carry momentum and energy, transmitted along causal links and conserved at each event. This is local in nature, but seems to be very different from our idea of augmenting the causal set with the `preferred past structure'. Nevertheless, it would be interesting to look for parallels between our approaches. 

In section 3.3, we discuss \textit{relative Cauchy evolution} (RCE) on causal sets. In \cite{BFV} the RCE was  introduced as the way to characterize the dynamics in LCQFT, see \cite{FewVerchReview} for further developments, and \cite{BFRej13} for an application to the characterization of background independence in perturbative quantum gravity. Relative Cauchy evolution measures the response of the dynamics to a local modification of the background spacetime (just as the stress-energy tensor in a continuum theory is obtained as a functional derivative of the action with respect to the metric). To define the RCE for causal sets, we first identify  distinguished regions, which we call past and future infinity, using the notion of \textit{layers} \cite{Sorkin09}. Then we consider two finite causal sets whose future and past infinity regions may be identified, so differences between the sets are localised in between. The RCE measures the response of the observables (classical or quantum) to that small change of the background causal set. In this work we study the RCE in the classical theory, but the generalization to quantum theory should be straightforward. We hope that RCE combined with ideas about dynamical generation of causal sets \cite{SR99} will allow us to understand how the evolution of observables on a causal set is related to the evolution of the causal set itself.

In section 4 we quantize the free theory using deformation quantization. In particular, we construct the Weyl algebra from the Poisson algebra of the classical theory and discuss states. We also show how to recover the Hilbert space representation of the Weyl algebra by considering the GNS representation. For the latter, one needs to fix a state and a possible choice in causal set theory is provided by
the Sorkin-Johnston (SJ) state \cite{SJ12,Johnston2}. This is a pure state which, as we emphasise, 
 is closely connected to a choice of inner product on the space of off-shell linear observables (in a finite causal set this is just $\RR^N$, $N\in\NN=\{1,2,\ldots\}$). The original SJ state is related in this way to the standard Euclidean inner product on $\RR^N$.  

However, if we want to take the continuum limit, it is better to modify the inner product on the space of linear observables, so that the state we obtain in the continuum is Hadamard. As shown
in \cite{FV12}, the continuum SJ state fails to be Hadamard on a large class of globally hyperbolic spacetimes (ultrastatic slabs).
It was later proven in \cite{FrancisThesis} that modifying the inner product on the space of smooth compactly-supported functions by means of changing the volume form on the underlying space-time results in the construction of a class of Hadamard states, interpreted as ``softened'' SJ states. This strategy for obtaining Hadamard states was first suggested by Sorkin in \cite{Sorkin}, as an alternative to the construction by Brum and Fredenhagen \cite{BrumF14}. The latter also produces a class of Hadamard states that can be interpreted as ``softened'' SJ states; an analogous construction for Dirac fields can be found in \cite{FL15}. (More discussion appears at the start of section~4.)

The results mentioned above suggest that one should be able to modify the inner product used for the construction of the state in the discrete setting, in such a way that the continuum limit would yield Hadamard state. We plan to follow this line of research in our future work.

 Last but not least, we close section 4 with the construction of the quantum interacting algebra $\fA_V(\Ccal)$, using the framework of pAQFT. As mentioned before, this result is of particular interest, since, to our best knowledge, this is the first systematic construction of interacting causal set quantum field theory models.

\section{Preliminaries}
\subsection{Causal sets}\label{sec:CausalSets}
A feature common to many quantum gravity theories is the idea that the fundamental structure of spacetime is discrete, and the continuum that we observe is emergent from this underlying structure. Causal sets originated as a suggested space of histories of a ``sum-over-histories'' approach to quantum theory, analogous to Feynman's path integral formulation. By discretising spacetime it also provides us with a regularization scheme to deal with UV divergences in QFT. Causal set theory models spacetime as a discrete structure of points, which are linked by a causal relation which respects the causal ordering of continuum spacetimes. Further, the macroscopic volume of a region of spacetime is proportional to the number of elements in the causal set contained in the region. Here we present an overview of the causal set theory.

The mathematical structure of causal sets is that of a partially ordered set \cite{Henson09,sorkin1990spacetime}. Thus, the standard continuum structure of spacetime is replaced by $ (\Ccal,\preceq) $, a discrete set of points $\Ccal$ -- with each point representing a spacetime event --  with a relation $\preceq$ satisfying the axioms of:
\begin{align}
x\preceq y\preceq z\implies x\preceq z&, \qquad \textrm{\it transitivity} \label{axioms1}\\
x\preceq y \text{ and } y\preceq x \implies x=y&, \qquad \textrm{\it acyclicity }\label{axioms2}\\
|I(x,y)|<\infty&,\qquad\textrm{\it local finiteness}\label{axioms3}
\end{align}
where 
\be\label{df:I}
 I(x,y)=\{z\in \Ccal\mid x\preceq z\preceq y\}
\ee
  is the set known as the causal interval (or Alexandrov set). We write $x\prec y$ if $x\preceq y$ and $x\neq y$. The physical interpretation of $x\preceq y$ is that the event $x$ is in the causal past of $y$ (allowing for equality). Some of the main building blocks of the theory are defined as follows:
\begin{df}\label{df:link} 
	A \textit{chain} in a causal set $(\Ccal,\preceq)$ is a totally ordered subset of $\Ccal$.
	A pair $x,y\in \Ccal$ is a \textit{link}, denoted $x\link y$, if  $x\prec y$ and there is no $w\in \Ccal$ such that  $x\prec w\prec y$. In particular, if $x\link y$, then $I(x,y)=\{x,y\}$. 
	A \textit{path}	is a chain such that each pair of consecutive elements is a link. 
\end{df}
Thus, a finite chain of length $n$ is an ordered set of elements
	\be
	x_1\prec x_2\prec\dots\prec x_{n-1}\prec x_{n}\,,
	\ee
while a finite path of length $n$ is an ordered set of elements with
\be
x_1\link x_2\link\dots\link x_{n-1}\link x_{n}\,.
\ee
We will denote such path by $(x_1,\dots, x_n)$. This is an analogue of a causal curve.

In analogy with the continuum, it is convenient to introduce the 
following notation.
\begin{df}
Given $x\in \Ccal$, we introduce the causal past 
\[
J^{-}(x)=\{y\in\Ccal|y\preceq x\}
\]
of $x$; for a subset $A\subset \Ccal$ we write $J^-(A)=\cup_{x\in A} J^-(x)$. 
It is also useful to define $J^-_0(x)=J^-(x)\setminus\{x\}$ and $J^-_0(A)=\cup_{x\in A} J_0^-(x)$.
Analogously, we also introduce the causal future $J^{+}$.
\end{df}

An interesting class of causal sets are those that can be formed by taking a subset of points in a Lorentzian manifold $\Mcal=(M,g)$, with a (subset of) the inherited causal order. These are called \emph{embedded causal sets}. For example a regular diamond lattice can be embedded within Minkowski spacetime.
 
The discussion of continuum limits can be facilitated by considering causal sets equipped with a length scale, forming triples $(\Ccal,\preceq,\ell)$. If $\Mcal=(M,g)$ is a time-oriented $D$-dimensional Lorentzian manifold, a sequence
$(\Ccal_n,\preceq_n,\ell_n)$ ($n\in\NN$) of embedded causal sets will be said to have $\Mcal$ as its continuum limit if, for all $n$,
\begin{equation}
    \Ccal_n\subset\Ccal_{n+1},\qquad
    p\preceq_n q\implies p\preceq_{n+1}q,
\end{equation}
$\Ccal\doteq\bigcup_n \Ccal_n$ is dense in $M$ and, for all $p,q\in\Ccal$, 
\begin{equation}
    \lim_{n\to\infty} \ell_n^D |I_{\Ccal_n}(p,q)| =
    \text{Vol}_{\Mcal}(J_{\Mcal}^+(p)\cap J_{\Mcal}^-(q)).
\end{equation} 
We emphasize that these continuum limits are to be regarded as theoretical constructions: a universe that actually is a causal set would be fundamentally discrete with a continuum as an approximation at suitable scales. Our continuum limits provide one way to control such approximations.

In the causal set literature, one often considers randomly chosen locally finite embedded causal subsets of a given $D$-dimensional Lorentzian manifold $\Mcal=(M,g)$. There is a specific choice of a measure -- the Poisson measure -- on these subsets, so that, fixing a length scale $\ell$, the probability that a randomly chosen $\Ccal$ has $n$ points in a volume $V$ is:
\begin{equation}\label{PoissonProc}
\text{Prob}(|\Ccal\cap V|=n)=\frac{(\rho V)^ne^{-\rho V}}{n!},
\end{equation}
where $\rho=\ell^{-D}$ is the fundamental density. In particular, the expected number of points in a given spacetime volume $V$ obeys 
\begin{equation} 
\ell^D \mathbb{E} |\Ccal\cap V|=   \text{Vol}_\Mcal(V).
\end{equation}
Causal sets obtained in this way are called \textit{sprinklings}. To generate the link matrix in a sprinkling, we say that two elements $p$ and $q$ are linked if and only if their Alexandrov neighbourhood  does not contain another element of the sprinkling (see \cite{Henson09} for details).
\begin{rem}
It is important to note that a generic embedded causal set  $\chi:\Ccal\hookrightarrow \Mcal$, does not inherit the local structure of that spacetime, since there could be direct links between points $x,y\in \Ccal$ such 
 that  $\chi(x),\chi(y)\in M$ appear widely separated with respect to the metric $g$. 
\end{rem}

For concrete computations, we typically label the elements of a causal set by natural numbers. One may always 
choose a \emph{natural labelling} which respects the ordering such that if $x_n\prec x_m $ then $n<m, n,m\in\mathbb{N}$ \cite{Sorkin11}. \emph{We caution the reader that, when we represent a field on a causal set by a column vector $\phi_n$ of its values at $x_n$, the elements at the top of the vector correspond to the values of $\phi$ in the far past.} 
Once this labelling has been found, two adjacency matrices can be constructed, both of which are lower triangular matrices which vanish on the diagonal:
\begin{df}
	The \textit{causal} or \textit{chain} matrix contains all of the relations between any causally related spacetime elements:
	\begin{equation}
	C_{xy}=\begin{cases}
	1, & \text{if } y\prec x \\
	0, & \text{otherwise.}
	\end{cases} 	\label{C-matrix}
	\end{equation} 
	The \textit{link} matrix is given by
	\begin{equation}
	L_{xy}=\begin{cases}
	1, & \text{if } y\link x \\
	0, & \text{otherwise}
	\end{cases} 	\label{L-Matrix}
	\end{equation}
	where $\link$ was introduced in Definition~\ref{df:link}. 
\end{df}

\subsection{Causal set Cauchy surfaces}
Here we consider natural analogues to the notion of a Cauchy surface for causal sets. We start with \textit{maximal anti-chains} \cite{MRS06}.
\begin{df}
	An	\textit{anti-chain}	is a collection of elements $\Sigma\subset\Ccal$ such that $\forall x,y\in \Sigma$ neither $x\prec y$ nor $y \prec x$. A \textit{maximal anti-chain} is an anti-chain such that any element not
	in it is related to it, which partitions the causal set as a union of mutually disjoint subsets $\Ccal = \Sigma \cup J^+_0(\Sigma)\cup J^-_0(\Sigma)$.
\end{df}
A maximal anti-chain can be regarded as a generalisation of an instantaneous time hypersurface. 
For our purposes it will be more convenient to generalise the idea that a Cauchy surface is a set on which initial data can be posed for normally hyperbolic operators. For second order operators in the continuum, the initial data consists of the field and its normal derivative; in the discrete setting the derivative is replaced by a finite difference and it is therefore convenient to replace maximal anti-chains by thickened objects that we will call \textit{Cauchy slices}. 

We start by defining Cauchy slices identified as  \textit{future}/\textit{past infinity}. In a finite causal set -- our main interest -- one can always find elements that have no future or no past. 
The definition of past and future infinity is formulated in terms of \textit{layers}, as introduced in \cite{Sorkin09}, which give a meaning to the spacetime separation of two points by using the notion of the causal interval \eqref{df:I} to find a `proximity measure' $n$ between two points: 
\begin{equation}
n(x,y)=|I(y,x)|-1\,. 
\end{equation}
Using this, the $i$'th layer below $x\in\Ccal$, $L_i(x)$, can be defined as:
\begin{equation}
L^-_i(x)\doteq\{y\in\Ccal\mid y\prec x,~n(x,y)=i\}. \label{layers}
\end{equation}
One can also define dual layers using the reversed order: 
\begin{equation}
L^+_i(x)\doteq\{y\in\Ccal\mid y\succ x,~ n(y,x)=i\},
\end{equation}
Figure \ref{fig:Layers} illustrates how the layers are defined for a regular lattice and a simple sprinkling. 

\begin{figure*} 
		\includegraphics{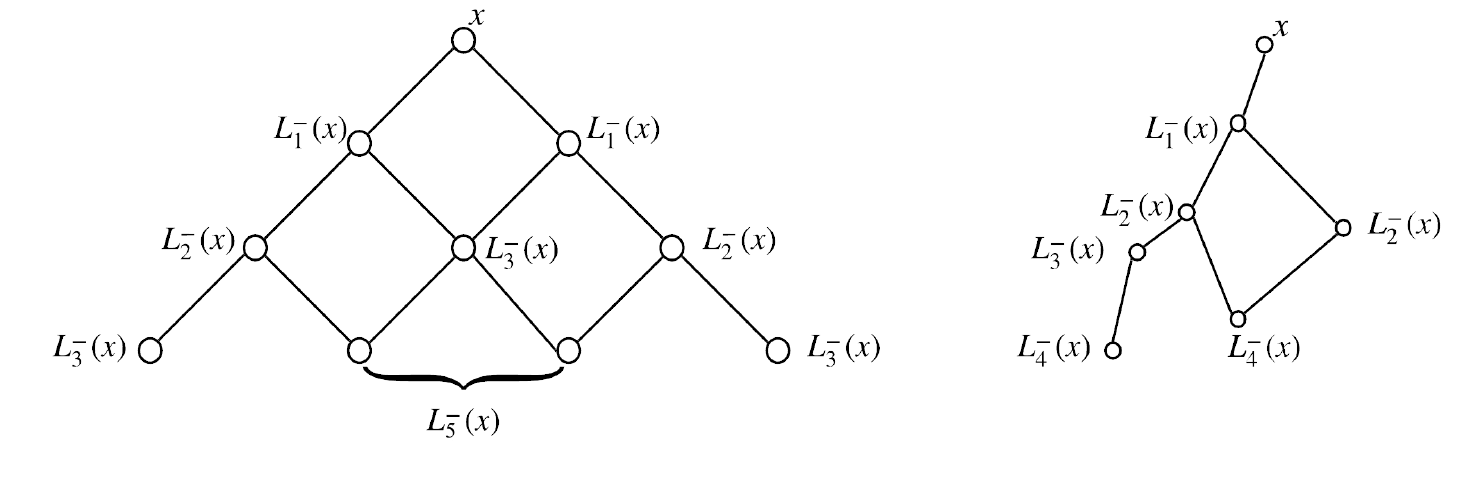} 
	\caption[Illustration of how layers are defined]{\label{fig:Layers} An illustration of how layers are defined on a regular diamond lattice (left) and a less symmetric causal set (right).}
\end{figure*}

The notion of future and past infinity is formalised as follows.
\begin{df}[Past and Future Infinity]
For $n\in\NN$, the $n$-layer \textit{past infinity} $ C_{n}^{-} $ is defined by
\begin{equation}\label{eq:pastinf}
C_{n}^-\doteq\{x\in \Ccal\mid L^-_i(x)=\emptyset, \forall i\geq n \} =\{x\in\Ccal\mid n(x,y)<n,~\forall y\prec x 
\}.
\end{equation}
Similarly, the $n$-layer \textit{future infinity} is defined by
\begin{equation}
C_{n}^+ \doteq\{x\in \Ccal\mid L^+_i(x)=\emptyset, \forall i\geq n \} 
=\{x\in\Ccal\mid n(y,x)<n,~\forall y\succ x 
\}.
\end{equation}
\end{df}
As $n(x,y)\ge 1$ for $y\prec x$, one notes that $C_1^-$ consists of all points having no predecessor; similarly, $C_1^+$ consists of those with no successor. Further, if $y\prec x\in C^-_n$ then all $z\prec y$ obey $n(y,z)<n(x,z)<n$ so also $y\in C_n^-$, i.e., $C_n^-$ is closed under taking predecessors, and $C_n^+$ is closed under taking successors. It follows that these sets are causally convex: that is, if $p,q\in C_n^\pm$ then $I(p,q)\subset C_n^\pm$. 

It will be convenient to represent the past and future infinities by diagonal matrices:
\begin{equation} 
(S_n^{\pm})_{xx}=\begin{cases}
1, & \text{if } x\in C_n^\pm \\
0, & \text{otherwise}\,.
\end{cases} \label{p/finfunit}
\end{equation}

We define a Cauchy slice in a general causal set as follows: take any maximal anti-chain $\Sigma$ and consider either the $n$-layer past infinity region within $J^+(\Sigma)$ or the $n$-layer future infinity region within $J^-(\Sigma)$. Note that our Cauchy slices do not in general correspond to the `thickened anti-chains' defined in~\cite{MRS06} (we thank an anonymous referee for an instructive counterexample).

For dynamics governed by second order differential equations, we expect to need at least two layers in a Cauchy slice to adequately specify the initial data. Depending on the discretization of the d'Alembertian employed, it may be necessary to include more layers. This is the case, for example, for the discretized d'Alembertian proposed in \cite{Sorkin09,DG13,Glaser_2014} (our Eq.~\eqref{Sorkin's Dalem}) and discussed further below in section~\ref{sec:waveeq}.

Finally, another proximity measure between two points is provided  by the notion of a \textit{rank}.
\begin{df}
	Given $x\in \Ccal$, the rank of  $y\in \Ccal$ relative to $x$, $\rk(x,y)$, is defined as the minimal number of links in a path from $y$ to $x$ (i.e., one less than the minimal length of such a path). The rank is infinite if there is no path from $y$ to $x$ and $\rk(x,x)=0$.
\end{df}
Its relationship to the past and future infinity sets is expressed by the following lemma.
\begin{lemma}\label{lem:RankLayer}
	Define $R^-_n$ to be the space of points that have no points to their past of rank $n$ or higher, i.e. 
	\be
	R_n^-\doteq \{x\in\Ccal|\rk(x,y)<n\,,\ \forall y \prec x\}\,.
	\ee
	Then we have 
	\be
	C_n^-\subset R_n^- 
	\ee
	and for the special case $n=2$, we have $C_2^-= R_2^-$. Analogous results hold for $R_n^+$ and $C_n^+$, where past is replaced with future.
\end{lemma}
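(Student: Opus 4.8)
The plan is to reduce both assertions to a single comparison between the two proximity measures, namely the inequality $\rk(x,y)\le n(x,y)$ valid for all $y\prec x$. First I would establish this inequality. Given $y\prec x$, local finiteness \eqref{axioms3} guarantees that $I(y,x)$ is finite, and any maximal chain in $I(y,x)$ running from $y$ to $x$ is in fact a path: if two consecutive elements $z_i\prec z_{i+1}$ failed to form a link there would be some $w$ with $z_i\prec w\prec z_{i+1}$, and since $y\preceq z_i\prec w\prec z_{i+1}\preceq x$ this $w$ lies in $I(y,x)$, contradicting maximality. Hence a path from $y$ to $x$ exists and $\rk(x,y)$ is finite. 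Any such path $y=z_0\link z_1\link\cdots\link z_k=x$ consists of $k+1$ distinct elements, all lying in $I(y,x)$, so $k+1\le|I(y,x)|=n(x,y)+1$, i.e.\ $k\le n(x,y)$. Minimising over all paths yields $\rk(x,y)\le n(x,y)$.

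With this in hand the inclusion $C_n^-\subset R_n^-$ is immediate: if $x\in C_n^-$ then $n(x,y)<n$ for every $y\prec x$, whence $\rk(x,y)\le n(x,y)<n$ for every such $y$, so $x\in R_n^-$.

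For the equality when $n=2$ I would show that both sets coincide with $\{x\in\Ccal : y\link x\ \text{for all}\ y\prec x\}$. The key observation is that for $y\prec x$ one always has $n(x,y)\ge 1$ (since $\{y,x\}\subset I(y,x)$ and $y\ne x$) and $\rk(x,y)\ge 1$ (since a path from $y\ne x$ to $x$ has at least one link). Thus the defining condition $n(x,y)<2$ of $C_2^-$ forces $n(x,y)=1$, i.e.\ $I(y,x)=\{y,x\}$, which by Definition~\ref{df:link} means $y\link x$; likewise the condition $\rk(x,y)<2$ of $R_2^-$ forces $\rk(x,y)=1$, which again means precisely $y\link x$. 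Hence $C_2^-=R_2^-$. The statements for $C_n^+$ and $R_n^+$ follow verbatim upon reversing the order, since both proximity measures are defined symmetrically under $\preceq\mapsto\succeq$.

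The argument is essentially elementary; the only point requiring care --- and the nearest thing to an obstacle --- is the existence and path-containment claim underlying $\rk(x,y)\le n(x,y)$, that is, verifying that minimal-rank paths may be taken inside $I(y,x)$ and invoking local finiteness to guarantee a path exists at all. Everything else is a matter of unwinding the two definitions at the value $n=2$, where both the interval size and the rank are pinned to their minimal value $1$.
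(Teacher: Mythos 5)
Your proof is correct and follows essentially the same route as the paper's: both hinge on the inequality $\rk(x,y)\le n(x,y)$ (stated in the paper in the contrapositive form ``$\rk(x,y)\ge n$ implies $|I(y,x)|\ge n+1$'') for the inclusion $C_n^-\subset R_n^-$, and both reduce the $n=2$ equality to the observation that rank or layer number less than $2$ forces $y\link x$. The only difference is that you spell out the maximal-chain argument justifying the key inequality, which the paper leaves implicit.
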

\begin{proof}
	Firstly, note that if $y\prec x$ and $\rk(x,y)\ge n$ then the cardinality of the Alexandrov set $I(y,x)$ is at least $n+1$, so $y\in L_i^-(x)$ with $i\geq n$, i.e., $x\notin C^-_n$. Now, turning this argument around, if $L_i^-(x)=\varnothing$ for all $i\geq n$, then for all $y\prec x$ we must have $\rk(x,y)<n$, so $x\in R_n^-$. 
	
	In the special case $n=2$, $\rk(x,y)<2$ for all $y\prec x$ implies that in fact $y\prec * x$, so $I(y,x)=2$ and hence $y\in L_1^-(x)$. As this holds for all $y\prec x$, we conclude that $x\in C^-_2$.
\end{proof}

\section{Classical field theory on a fixed causal set}
In this work, we take the algebraic viewpoint and introduce the  classical theory on a fixed causal set by constructing an appropriate Poisson algebra. We focus on the example of the real scalar field, starting with a discussion of the relevant kinematical structures and then discussing discretized d'Alembertian operators and their Green functions in some detail. From there we move to a discussion of a Peierls bracket and then to construct algebras describing free and interacting field theories.

\subsection{Kinematical structure}\label{sec:kinematics}
 
\begin{df}[Real scalar field on a causal set]
	The real scalar field on a causal set $\Ccal$ of size $N$ has a configuration space $\Ecal(\Ccal)$ consisting of maps $\phi:\Ccal\rightarrow\RR$,	with a vector space structure of pointwise operations.
	Given a natural labelling of $\Ccal$ by $\{1,\dots,N\}$, we identify
	$\Ecal(\Ccal)\cong \RR^N$, regarded as a space of column vectors. We use the notation $\phi_i$, $i=1,\dots N$ for the components of field $\phi$, remembering that low values of the index correspond to spacetime events in the `far past'. 
\end{df} 
If the causal set is equipped with a length scale, it becomes possible to discuss dimensionful fields, saying that $\phi$ has dimension $d$ to mean dimensions of $[\text{length}]^d$. On the basis that a length is a quantity whose numerical value increases in inverse proportion to a decrease in the units of length, a scalar field of dimension $d$ on $\Ccal$ should transform under a change of length scale $\ell\mapsto {\lambda}\ell$ by 
\begin{equation}
\phi(p)\mapsto {\lambda}^{-d} \phi(p).
\end{equation}
More generally, if $(\Ccal,\preceq,\ell)$ is embedded within
$(\Ccal',\preceq',\ell')$, a dimension $d$ scalar field $\phi'$ on $\Ccal'$ pulls back to a field on $\Ccal$ defined by
\begin{equation}
\phi(p) = (\ell'/\ell)^d \phi'(p).
\end{equation}

We wish to consider a field theory on a causal set that has a claim to be an analogue of the wave equation on a continuum Lorentzian spacetime. It is therefore necessary to be able to compare the continuum and discrete situations. One way of doing this is through a suitable continuum limit. 
First, a dimension $d$ scalar field $\phi$ on a time-oriented Lorentzian spacetime $\Mcal$ may be pulled back to a function $\phi_\Ccal(p)= \ell^{-d}\phi(p)$ on any causal set
$(\Ccal,\preceq,\ell)$ embedded in $\Mcal$.
This viewpoint allows us to work solely with numerical scalar fields on causal sets. 

Next, consider a situation in which $\Mcal$ is
the continuum limit of a sequence of causal sets $(\Ccal_n,\preceq_n,\ell_n)$, as defined in section \ref{sec:CausalSets}. We say that a sequence of functions 
$\phi_{\Ccal_n}:\Ccal_n\to\RR$ has a continuum limit as a continuous dimension $d$ field $\phi:\Mcal\to\RR$ if
\begin{equation}
   \phi(p)= \lim_{n\to\infty} \ell_n^d \phi_{\Ccal_n}(p)  
\end{equation}
for all $p\in \Ccal\doteq\bigcup_n \Ccal_n$. 

For example, we will shortly discuss discretised analogues of the d'Alembertian, which changes dimensions by two powers of length in the continuum. If the continuum limit $\phi$ just described is twice continuously differentiable, a family of discrete d'Alembert operators $P_n$ on a sequence of causal sets $(\Ccal_n,\preceq_n,\ell_n)$ would therefore be expected to obey
\begin{equation}\label{eq:contcrit}
\ell_n^{d-2} (P_n \phi_{\Ccal_n})(p) \longrightarrow (\Box\phi)(p)
\end{equation}
as $n\to\infty$ for every $p\in \bigcup_n \Ccal_n$.  
 
Observables are defined similarly to the continuum case:
\begin{df}[Causal set observables]\label{moreobs}
	Observables on a causal set $\Ccal$ are smooth maps from the configuration space $ \Ecal(\Ccal) $, to $\CC$, i.e. they are elements of $\Ci(\Ecal(\Ccal),\CC)\equiv \Fcal(\Ccal)$. The space of observables is equipped with the natural structures of addition
	\[
	(F+G)(\ph)=F(\ph)+G(\ph)
	\] 
	and multiplication
	\[
	FG(\ph)=F(\ph)G(\ph)\,.
	\]
\end{df}	
A special case is given by linear observables, defined, for each $f\in\CC^N$ ($|\Ccal|=N$), by
	\begin{equation}\label{causet obs}
	\Phi_f:\RR^N\to\CC;\ \Phi_f(\phi)\doteq f^i\phi_i\equiv f^T\phi,
	\end{equation}
	where $\phi\in\Ecal(\Ccal)\cong\RR^N$ and we have used the Einstein summation convention for repeating indices.
	 The space of linear observables is denoted by $ \Xcal(\Ccal) $. This space has a vector space structure, inherited from $\Fcal(\Ccal)$, and this structure is compatible with the addition on the labeling space, i.e.
	  $ \Phi_g+\lambda \Phi_h=\Phi_{g+{\lambda} h} $, for any $ \lambda\in\CC $. 
	\begin{rem}\label{rem:prod}
	In the last expression of formula \eqref{causet obs}, we made implicit use of the Euclidean metric on $\RR^N$ and the induced inner product. This metric allows us to identify elements of $\RR^N$ with observables and will be used to raise and lower indices. To see how this is consistent with the viewpoint on continuum limits and dimensions taken earlier, consider $\phi,f$ that are smooth functions on $D$-dimensional $\Mcal$ that have supports intersecting compactly (for simplicity) and
	 have dimensions $d_\phi,d_f$. Then one has
	\be
	\Phi_{f_\Ccal}(\phi_\Ccal)= \ell^{-d_\phi-d_f} \sum_{p\in\Ccal} \phi(p)f(p)\,,
	\ee
	so if we have  a sequence of functions 
	$\phi_{\Ccal_n}:\Ccal_n\to\RR$ and 	$f_{\Ccal_n}:\Ccal_n\to\CC$ with continuum limits $\phi$ and $f$ respectively, then
\begin{align}
\lim_{n\to\infty}		\Phi_{f_{\Ccal_n}}(\phi_{\Ccal_n}) &= \ell^{-d_\phi-d_f-D}\lim_{n\to\infty}	 \ell^D \sum_{p\in\Ccal_n} \phi_{\Ccal_n}(p)f_{\Ccal_n}(p) \notag \\
& = \ell^{-d_\phi-d_f-D}
	\int_M f(x) \phi(x) d\mu_g(x).
\end{align}
	Hence the continuum analog of the inner product is the choice of a volume form $\varepsilon$ on spacetime $\Mcal=(M,g)$ (take e.g. the invariant volume form $d\mu_g$ induced by the Lorentzian metric), allowing one
	 to identify $f\in\Ci_c(M,\CC)$ with the observable
	\be\label{eq:smeared:cont}
	\Phi_f(\ph)=\int_M f(x) \ph(x) \varepsilon(x)\,.
	\ee	
	The consequences of choosing a different inner product will be discussed in more detail in section~\ref{sec:free}.
	\end{rem}

	Next we introduce the notation for functional derivatives. The functional derivative of $F\in\Fcal(\Ccal)$ at point $\ph\in\Ecal(\Ccal)$ in the direction of $\psi\in\Ecal(\Ccal)$ is defined by:
	 \be
	 \left<F^{(1)}(\ph),\psi\right>\doteq \lim_{t\rightarrow 0}\frac{1}{t}(F(\ph+t\psi)-F(\ph))\,,
 	 \ee 
 	 where $t\in\RR$. We will also use the notation 
 	 \be
 	 F^{(1)}(\ph)\equiv \frac{\delta F}{\delta \phi}(\ph)\,.
 	 \ee
	 Note that since $\Ecal(\Ccal)\cong \RR^N$, the functional derivative  $\frac{\delta F}{\delta \phi}(\ph)$ at point $\ph$ is a linear $\CC$-valued functional on $\RR^N$  and therefore can be identified
	 with an element of $\CC^N$ and we write its components as $\frac{\delta F}{\delta \phi_i}(\ph)$, $i=1,\dots,N$.
	 
	 We introduce a product on $ \Xcal(\Ccal) $, induced by the component-wise multiplication of the smearing functions $g\in\CC^N $, or the \textit{Hadamard product}: 
	 \begin{equation}\label{eq:Hadamardproduct}
	 \Phi_g*\Phi_h\doteq \Phi_{g*h}\,,
	 \end{equation} 	 
	 where $(g*h)^i=g^ih^i$, with no summation over the repeated indices. 
	 
	 Another natural product on $ \Xcal(\Ccal) $ is the pointwise product of observables, inherited from $\Fcal(\Ccal)$:
	 $$ (\Phi_g\cdot \Phi_h)(\phi)=\Phi_g(\phi)\Phi_h(\phi)\,,$$
	 which does not leave $ \Xcal(\Ccal) $ invariant. Let $\Fcal_{\reg}(\Ccal)$ denote the subalgebra of $\Fcal(\Ccal)$ generated by $ \Xcal(\Ccal) $ with respect to $\cdot$. This is the analog of \textit{regular functionals} in continuum pAQFT. They form a $*$-algebra, where the $*$ operation is just the complex conjugation.

\subsection{Classical dynamics}\label{sec:classdyn}
\subsubsection{Discretized retarded wave equations}\label{sec:waveeq}

As in continuum QFT, we will construct the interacting theory as a perturbation of a free field equation. The starting-point is therefore a suitable discretization of the continuum field equation
\begin{equation}
\Box\phi = f
\end{equation}
to a causal set. Several possible causal set d'Alembertians or `box operators' have been discussed previously \cite{Sorkin09,DG13,Glaser_2014,Aslan_etal:2014}, and we will give a specific example below as well as introducing a new type of box operator.  We study equations taking the form
\begin{equation}\label{eq:WE1}
P \phi = Kf 
\end{equation}
neglecting edge effects for the moment -- they will be discussed in Sec.~\ref{sec:Cauchy}. Here $f,\phi\in \Ecal(\Ccal)$ are the source and solution respectively, while $P$ and $K$ 
are linear maps on $\Ecal(\Ccal)$. The map $K$ is newly introduced here, and can absorb factors (it sometimes turns out to be more convenient to discretise $\frac{1}{2}\Box$ rather than $\Box$) but also provides additional freedom to determine the way in which a continuum source is discretized.

Various requirements on $P$ were set down in~\cite{Aslan_etal:2014}. First, in addition to linearity, $P$ is required to be a \emph{retarded operator}, meaning that $(P\phi)_p$ is a linear combination of $\phi_q$ with $q\preceq p$. We also require that $K$ be retarded in this sense and that both operators are real.
As will be seen, this requirement ensures the causal nature of solutions to~\eqref{eq:WE1}. Second, the prescription for constructing $P$ and $K$ should be independent of the way in which the causal set is labelled -- a covariance requirement. In~\cite{Aslan_etal:2014} a requirement of `neighbourly democracy' is imposed, namely that all points in the same layer below $p$ contribute with equal weight to $(P\phi)_p$; we will not impose this and indeed will introduce an `undemocratic' example that may be defined on causal sets with a \emph{preferred past structure}. Our last general requirement is that each $(P\phi)_p$ should have nontrivial dependence on $\phi_p$; in~\cite{Aslan_etal:2014} it was assumed that the coefficient should be independent of $p$, but one could certainly envisage prescriptions in which the coefficient was variable and determined by the statistics of the causal order, restricted to the past of $p$.

In a natural labelling of the causal set, these requirements ensure that $P$ is in particular lower triangular and its diagonal entries are all nonvanishing. Consequently, $P$ is invertible and it may easily be seen that $P^{-1}$ is also a retarded operator. Clearly the solution to~\eqref{eq:WE1} is then
$\phi = E^+ f$, where
\begin{equation}
E^+ \doteq P^{-1}K
\end{equation}
defines the \emph{retarded Green operator}. Note that the composite of retarded operators is retarded. As in~\cite{Sorkin}, we define the \emph{advanced Green operator} to be
\begin{equation}
E^-\doteq (E^+)^T,
\end{equation}
and the advanced-minus-retarded\footnote{This differs from the convention used e.g. in \cite{DF02,Book,HR}, where the operator $P$ in the continuum is $-\Box$, rather than $\Box$, so that $E^\pm$ in those references are Green functions for $-\Box$ and $E$ ends up with the opposite sign.} operator is the anti-symmetric matrix
\begin{equation}\label{df:E}
E =E^- - E^+ =  (E^+)^T-E^+.
\end{equation}
By construction, $(E^-f)_p$ is a linear combination (with real coefficients) of $f_q$ with $p\preceq q$, and therefore an \emph{advanced operator} by analogy with previous definitions. We have followed the existing literature by emphasising the retarded equations and Green operators as the starting-point. It would be possible, though less physically well-motivated, to base the discussion on advanced operators. 

As a specific example, we recall the d'Alembertian defined in~\cite{Sorkin09} (we multiply by a factor of $\tfrac{1}{2}$ and adapt to our sign conventions)
\begin{equation}
(P_S\phi)_p\doteq \phi_p-2\left(\sum_{q\in L^-_1(p)}\phi_q-2\sum_{q\in L^-_2(p)}\phi_q+\sum_{q\in L^-_3(p)}\phi_q\right) \,. \label{Sorkin's Dalem}
\end{equation}
Sorkin also included a factor of $\ell^{-2}$, where $\ell$ is the fundamental length scale associated with the sprinkling, which is not present here because of the way we treat dimensionful fields. 
In matrix form, 
\begin{equation}
(P_S)_{pq}=\begin{cases}
1,&p=q\\
-2,4,-2,\ p\neq q,&n(p,q)=1,2,3\ \textrm{respectively}\\
0,&\textrm{otherwise,}
\end{cases}
\end{equation}
and is lower-triangular in a natural labelling.
In \cite{Sorkin09} the continuum limit of the operator \eqref{Sorkin's Dalem}, averaged over sprinklings into two-dimensional Minkowski space $\M_2$, was shown to be the continuum d'Alembertian $\frac{1}{2} \square$. Generalizations exist to $d$-dimensional spacetimes for $d>2$, but involve more layers and different coefficients, to obtain the correct continuum limit for sprinklings into $\M_d$ \cite{BDD10,DG13,Glaser_2014}.  

\subsubsection{Causal sets with a preferred past structure}\label{sec:prefpast}

As an alternative to the principle of neighbourly democracy, we propose a new type of discretized d'Alembertian for causal sets, which will be investigated in more detail elsewhere. It is based on
a `preferred past structure' defined as follows. 
\begin{df}\label{df:pref:future}
	Given a causal set $\Ccal$, a \textit{preferred ($2$-step) past structure} is a map ${\Lambda}:\Ccal\setminus {C_2^-}\to\Ccal$  so that, for each $p\in \Ccal \setminus {C_2^-}$, the \textit{preferred past} ${\Lambda}(p)$ of $p$ is a point of rank $2$ in the past of $p$. The corresponding \emph{preferred past matrix} is a lower triangular matrix with vanishing diagonal entries, given by
	\begin{equation}
	\Lambda_{xy}=  \delta_{{\Lambda}(x)\,y}= \begin{cases}
	1 & \text{if $y={\Lambda}(x)$,} \\
	0 & \text{otherwise.}
	\end{cases}
	\end{equation}
\end{df}

We will regard the causal interval between ${\Lambda}(p)$ and $p$ as an elementary non-atomic volume in the causal set. Lemma~\ref{lem:RankLayer} shows that every point outside $C_2^-$ has points of rank $2$ in its past. Therefore every causal set in which every point has at most finitely many past-directed links (and therefore at most finitely many points of rank $2$ in its past) admits (at least one) preferred $2$-step past structure.

In general, there may exist more than one possible preferred past structure, in which case a choice must be made. Ideally, there should be some additional rule for selecting $\Lambda$ in a given causal set to restrict the choice. For example, one could require that ${\Lambda}(p)$ of $p$ is a point with maximal layer number (among all the points in the past of $p$ of rank $2$). Consider the regular diamond lattice in $\M_2$, a portion of which is illustrated in the left-hand part of Fig.~\ref{fig:Layers}. The points of rank $2$ below $x$ are in the third row, and there is a unique point with maximal layer number, i.e., the centre point in that row, belonging to the third layer below $x$; therefore the `maximal layer rule' selects a unique preferred past structure in this example. For general sprinklings, it is not yet clear to us what rule is the most appropriate one. Other possible rules for selecting $\Lambda$ will be investigated in our future work.

Using a preferred past structure, we may introduce a new type of 
discretised retarded d'Alembert{\-}ian. An example, developed especially with two-dimensional continuum spacetimes in mind, is given as follows: 
\begin{equation}\label{NewdAl}
(P_{\Lambda}\phi)_p= \begin{cases}\phi_p - 2\displaystyle\left(\mean_{{\Lambda}(p)\prec q \prec p}\phi_q\right) + \phi_{{\Lambda}(p)} & p\notin C_2^- \\ \phi_p & p\in C_2^-,  
\end{cases}
\end{equation}
where 
\begin{equation}
\mean_{q\in U}\phi_q = |U|^{-1}\sum_{q\in U} \phi_q
\end{equation} 
is the arithmetic mean taken over a subset $U\subset\Ccal$. Here it is necessary to treat points in $C_2^-$ separately because they do not have preferred pasts. 
Note that $(P_\Lambda\phi)_p$ involves a sum over points of at most rank $2$ below $p$ --- to be precise, those in the causal interval between $p$ and its preferred past $\Lambda(p)$ --- and that the coefficients associated with each contributing point are determined by the rank relative to $p$ and so are independent of the way that the causal set is labelled. 

It is convenient to present $P_\Lambda$ as a matrix. To this end, we define a lower triangular matrix $\Omega$ with vanishing diagonal given by 
\begin{equation}
\Omega_{pq}= \begin{cases}
1 &  \Lambda(p)\prec q\prec p\\ 
0 & \text{otherwise,}
\end{cases}
\end{equation}
which encodes information about the causal intervals associated with the preferred past structure, and also a diagonal weight matrix $W$,  
\begin{equation}
W_{pp} = \begin{cases} (\sum_{q}\Omega_{pq})^{-1} & p\notin C_2^- \\
0 & p\in C_2^-\,.
\end{cases}
\end{equation}
The second case deals with edge effects to avoid an infinite value. In fact its value will not matter. Then the discretised operator may be written as
\begin{equation}
P_\Lambda = \1 + \Lambda -2 W \Omega.
\end{equation} 

One reason for regarding $P_{\Lambda}$ as a causal set analogue of half the d'Alembertian is that it produces a valid discretisation of the continuum operator $\frac{1}{2}\Box$ using regular diamond lattices. Consider 
the lattice $\{(m\ell\sqrt{2},n\ell\sqrt{2}):m,n\in\ZZ\}$ embedded in $\M_2$, 
using $(u,v)$-coordinates related to the standard inertial Minkowski coordinates by $u=t-x$, $v=t+x$. The continuum metric and d'Alembertian are
$ds^2=du\,dv$ and $\Box = 4\partial_u\partial_v$.
Each lattice cell therefore has spacetime volume $\ell^2$ (explaining the factor of $\sqrt{2}$ above), so $\ell$ is a natural length scale associated with the lattice and indeed one has
\begin{equation}
	\ell^2 |I(p,q)| \sim \textrm{Vol}_{\M_2}(J_{\M_2}^+(p) \cap J_{\M_2}^-(q))\,,
\end{equation}
when $p$ and $q$ are widely separated lattice points and $J_{\M_2}^\pm$ on the right-hand side refer to the causal future/past of the continuum spacetime.
The sequence of such lattices with $\ell_r=\ell/r$ ($r\in \NN$) has $\M_2$ as its continuum limit, associating the length scale $\ell_r$ with each. We denote the corresponding causal sets by $(\Ccal_r,\preceq,\ell_r)$, with ordering $p\preceq q$ in all cases determined by the causal order of $\M_2$. 

\begin{figure} 
	\includegraphics{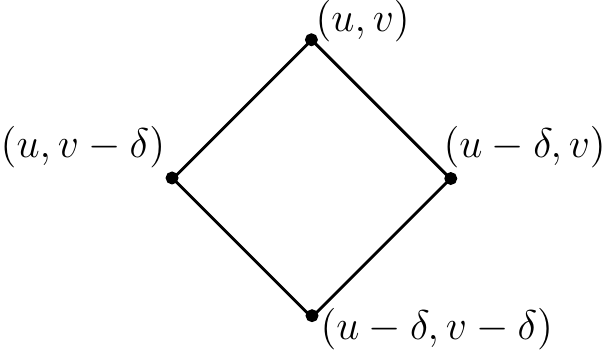}  
	\caption{\label{fig:DiamMin}Parametrization of points in a segment of a regular diamond lattice, embedded into $\M_2$, where $\delta=\ell\sqrt{2}$.}
\end{figure}
\begin{widetext}
	Suppose, for simplicity, that $\phi$ is a smooth dimensionless scalar field on $\M_2$, which pulls back to causal set $\Ccal_r$ by restriction. Writing $\delta_r=\ell\sqrt{2}/r$, we have
\begin{equation}
({P_{\Ccal_r, \Lambda}}\phi_{\Ccal_r})(u,v)=\phi(u,v)-\phi(u-\delta_r,v)-\phi(u,v-\delta_r)+\phi(u-\delta_r,v-\delta_r)\,. \label{eq:FieldParam}
\end{equation}
Taking Taylor series to second order, 
\begin{align}
\phi(u-\delta_r,v) &= \phi(u,v) -\delta_r \partial_u\phi(u,v) + \frac{\delta_r^2}{2} \partial_u^2\phi(u,v) + O(\delta_r^3)\notag\\
\phi(u,v-\delta_r) &= \phi(u,v) -\delta_r \partial_v\phi(u,v) + \frac{\delta_r^2}{2} \partial_v^2\phi(u,v) + O(\delta_r^3)\notag\\
\phi(u-\delta_r,v-\delta_r) &= \phi(u,v) -\delta_r( \partial_u\phi(u,v)+\partial_v\phi(u,v)) \notag\\ 
&\qquad + \frac{\delta_r^2}{2} (\partial_u^2\phi(u,v)+2\partial_u\partial_v\phi(u,v)+ \partial_v^2\phi(u,v)) + O(\delta_r^3)
\end{align}
with error terms uniform in $r$. Therefore, 
\begin{equation}
({P_{\Ccal_r, \Lambda}})(u,v)=\delta_r^2\partial_u\partial_v\phi(u,v)+O(\delta_r^3)\,, \label{eq:FieldParam2}
\end{equation}
and it follows that
\begin{equation}
\ell_r^{-2}({P_{\Ccal_r, \Lambda}}\phi_{\Ccal_r})(u,v)\longrightarrow  2\partial_u\partial_v\phi(u,v)= \tfrac{1}{2}(\Box\phi)(u,v)
\end{equation}
as $r\to\infty$, which is the claimed continuum limit. 
\end{widetext}
Given this result, a natural choice for $K$ is to set $K_\Lambda=\tfrac{1}{2}\1$. However, this prescription is not the only possibility and should be reconsidered near the past boundary of the causal set if there is one. See further comments below. We remark that Sorkin's operator $P_S$ does not have the continuum limit $\tfrac{1}{2}\Box$ on the regular lattice; instead, it is adapted to sprinklings into $\M_2$.

The extent to which $P_\Lambda$, or similar operators, approximate the d'Alembertian in higher dimensions on regular or sprinkled lattices will be reported elsewhere. Our main purpose in introducing it here is to illustrate the point that there are discretised d'Alembertians that do not obey the neighbourly democracy principle, but are still naturally associated with the causal set, augmented by a preferred past structure. Our hope is that some generalization of this ansatz could be applied in arbitrary dimensions in such a way that the dimension itself is not an input (as in the proposals \cite{BDD10,DG13,Glaser_2014}), but an emergent quantity. Typically, there will be more than one possible preferred past associated with a given causal set. One could remove the element of choice by averaging $P_\Lambda$ over all such possibilities.

\subsubsection{Retarded Green function for $P_\Lambda$ on the regular diamond lattice}

The retarded Green function may be computed exactly for $P_\Lambda$ on the regular diamond lattice in $\M_2$ for various choices of operator $K$, which 
may help to illustrate the additional freedom that it represents. The starting observations are that $\Omega$ precisely coincides with the link matrix $L$, and that $W=\tfrac{1}{2}\1$. Thus we have
\begin{equation}
P_{\Lambda}= \1 - L + \Lambda.
\end{equation} 
\begin{lemma}
	For the regular diamond lattice, and taking $K_\Lambda=\tfrac{1}{2}\1$, the retarded and advanced Green functions are  
	\begin{align}\label{eq:latticeEpm}
	E_\Lambda^+ &= \tfrac{1}{2}(P_{\Lambda})^{-1} = \tfrac{1}{2}(\1-L+{{\Lambda}} )^{-1}=\tfrac{1}{2}(\1+C) \\
	E_\Lambda^- &\doteq (E_\Lambda^+)^T =\tfrac{1}{2}(\1+C^T)\,.
	\end{align}
\end{lemma}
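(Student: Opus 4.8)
The plan is to notice that the outer equalities are essentially definitional, so that the whole content of the Lemma is a single combinatorial matrix identity. By the definition of the retarded Green operator $E^+=P^{-1}K$ with $P=P_\Lambda$ and the choice $K_\Lambda=\tfrac12\1$, one has $E_\Lambda^+=\tfrac12(P_\Lambda)^{-1}=\tfrac12(\1-L+\Lambda)^{-1}$, using the presentation $P_\Lambda=\1-L+\Lambda$ recorded just above the statement; the advanced case $E_\Lambda^-=(E_\Lambda^+)^T=\tfrac12(\1+C^T)$ then follows at once, since $\1^T=\1$. Everything therefore reduces to proving
\begin{equation}
(\1-L+\Lambda)(\1+C)=\1, \qquad\text{equivalently}\qquad (L-\Lambda)(\1+C)=C .
\end{equation}
I would work with the second form, whose two sides have a transparent meaning on the lattice.

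The key step is to read off the local structure of the regular diamond lattice in $\M_2$. Labelling points by $(i,j)$ in units of $\delta=\ell\sqrt{2}$ along the two null directions, the link-predecessors of $p=(i,j)$ are exactly $(i-1,j)$ and $(i,j-1)$, the preferred past is the corner $\Lambda(p)=(i-1,j-1)$, and the causal past is the quadrant $J^-(p)=\{(k,l):k\le i,\ l\le j\}$. Consequently the $p$-row of $L-\Lambda$ carries weight $+1$ at the two links and $-1$ at the corner, while the $q$-column of $\1+C$ is the indicator $[q\preceq r]$ of the future cone of $q$. The $(p,q)$ entry of $(L-\Lambda)(\1+C)$ is then the mixed second difference
\begin{equation}
[q\preceq(i-1,j)]+[q\preceq(i,j-1)]-[q\preceq(i-1,j-1)],
\end{equation}
and a short inclusion–exclusion on the quadrant indicators collapses this to $[q\prec p]=C_{pq}$, which is the required identity. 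Equivalently one may argue columnwise: on the lattice $(P_\Lambda\phi)_p=\phi_p-\phi_{(i-1,j)}-\phi_{(i,j-1)}+\phi_{(i-1,j-1)}$ is the discrete analogue of the $\partial_u\partial_v$ appearing in the continuum-limit computation, and applying it to the indicator of a future cone annihilates it everywhere except at the apex, where it returns $1$; this is exactly $P_\Lambda(\1+C)=\1$.

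The main obstacle is not the bulk telescoping, which is elementary, but the treatment of edges together with the very meaning of $(P_\Lambda)^{-1}$. Points of $C_2^-$ have no preferred past, and there $P_\Lambda$ acts simply as the identity, so the clean difference formula degenerates; on the full infinite diamond lattice this set is empty and the formula holds at every point, but one must then justify manipulating infinite matrices. I would resolve this by invoking local finiteness together with the lower-triangular structure in a natural labelling: each row of $P_\Lambda$ has only finitely many nonzero entries, so both products $P_\Lambda(\1+C)$ and $(\1+C)P_\Lambda$ are given entrywise by finite sums, and a lower-triangular matrix with nonvanishing diagonal has a unique two-sided inverse. It therefore suffices to verify the one-sided identity $P_\Lambda(\1+C)=\1$ established above, treating the apex contribution $[p=q]$ separately; transposing then yields the advanced statement and completes the proof.
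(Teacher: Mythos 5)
Your core computation is correct, and your strategy is of the same general kind as the paper's: a direct entrywise verification, using the geometry of the diamond lattice, that $\1+C$ inverts $P_\Lambda=\1-L+\Lambda$. The difference is which side you verify. The paper establishes the \emph{left}-sided identity $(\1+C)(\1-L+\Lambda)=\1$, by computing $\Lambda+C\Lambda$ (the indicator of the chronological past of $p$) and $L+CL$ (equal to $2$ on the chronological past, $1$ on the null boundary minus $p$, and $0$ elsewhere), so that $\Lambda+C\Lambda-(L+CL)=-C$. You instead establish the \emph{right}-sided identity $(\1-L+\Lambda)(\1+C)=\1$ via inclusion--exclusion on the quadrant indicators in null coordinates, i.e. $[q\preceq(i-1,j)]+[q\preceq(i,j-1)]-[q\preceq(i-1,j-1)]=[q\prec p]$, which is a correct computation. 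Since matrix multiplication is noncommutative, these are genuinely different (though both true) identities, so your key lemma is in effect the transpose of the paper's; both computations involve only finite sums and are equally elementary.

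The step that does not hold up as stated is your passage from the one-sided to the two-sided inverse. You work on the full lattice $\{(m,n):m,n\in\ZZ\}$, precisely so that $C_2^-=\emptyset$ and the bulk formula for $P_\Lambda$ holds everywhere; but on that lattice there is \emph{no} natural labelling by $\NN$, since every point has infinitely many predecessors, so ``lower triangular with nonvanishing diagonal'' is not available, and for infinite matrices a one-sided inverse does not in general imply a two-sided one (shift operators are the standard counterexample). The repair stays entirely within your own method: the same inclusion--exclusion, applied to the past-cone indicators $[r\preceq p]$ given by the rows of $\1+C$, yields
\begin{equation}
\bigl((\1+C)P_\Lambda\bigr)_{pq}=[q\preceq p]-[q+(1,0)\preceq p]-[q+(0,1)\preceq p]+[q+(1,1)\preceq p]=[q=p]\,,
\end{equation}
where $q+(1,0)$, $q+(0,1)$ are the link-successors of $q$ and $q+(1,1)$ is the point whose preferred past is $q$. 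This is exactly the left-sided identity the paper proves. Verifying both sides in this way needs no labelling or abstract uniqueness argument, and would in fact make your proof slightly more complete than the paper's own, which records only one side.
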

\begin{proof}
	Direct calculation gives
	\be
	\Lambda_{pq} + [C{\Lambda}]_{pq} = \begin{cases}
	1 & q\in I_{\M_2}^-(p)\\ 0 & \text{otherwise,}
	\end{cases}
	\ee
	because $[C{\Lambda}]_{pq}=1$ if and only if $q\prec {\Lambda}(p)$. Similarly, 
	\be
	L_{pq} + [CL]_{pq} = \begin{cases}
	2 & q\in I_{\M_2}^-(p)\\ 1 & q\in \dot{J}_{\M_2}^-(p) \setminus\{p\}  \\ 0 & \text{otherwise,}
	\end{cases}
	\ee
	and one sees immediately that 
	\begin{equation}
	(\1+ C)(\1 - L + {\Lambda}) = \underbrace{{\Lambda} + C{\Lambda} -(L+ CL)}_{-C}+\1+C =\1\,,
	\end{equation}
	so $(P_\Lambda)^{-1}=\1+C$, giving~\eqref{eq:latticeEpm}. 
	The result for $E_\Lambda^-$ is obvious.
\end{proof}
This result shows that $[E^+_\Lambda]_{pq}$ takes the value $\tfrac{1}{2}$ if 
$p\preceq q$ and zero otherwise.

Let us now consider the continuum limit of these operators as the mesh of the diamond lattice tends to zero. Suppose $f$ is a smooth compactly supported function on $\M_2$ of dimension $[L]^{-2}$ for simplicity, and pull it back to the causal set $\Ccal_r$ (as in\ Sec.~\ref{sec:prefpast}) by
$(f_{\Ccal_r})_p = \ell^{2}f(p)$. Then 
\be
(E^+_{\Ccal_r, \Lambda}f_{\Ccal_r})_{p} = \frac{\ell_r^2}{2}\sum_{q\preceq p} f(q)
\ee
On the other hand, the retarded Green function on $\M_2$ is given by 
\begin{equation}
E_{\M_2}^{+}(t,x;t',y) =\tfrac 1 2 \theta((t-t')-|x-y|) =
\tfrac{1}{2} \theta(t-t')\theta((t-t')^2-|x-y|^2)
\,,
\end{equation}
where $(t,x)\in\M_2$ is a point in 2D Minkowski spacetime (with signature $(+-)$) and $\theta$ is the Heaviside step function. Thus, for fixed $(t,x)$, $E_{\M_2}^{+}$ takes value $\tfrac{1}{2}$ for $(t',y)$ inside the closed past lightcone of $(t,x)$ and vanishes otherwise. The function $E_{\M_2}^{+}f$ is dimensionless, and given by 
\begin{equation}\label{eq:EretMink2}
(E_{\M_2}^{+}f)(p)=\tfrac{1}{2} \int_{J_{\M_2}^-(p)} f(q) d\mu_g(q)
\,.
\end{equation}
It follows that
\be
(E_{\Ccal_r,\Lambda}^{+}f_{\Ccal_r})_{p} \to (E_{\M_2}^{+}f)(p)
\ee
as $r\to\infty$, because the spacetime volume of each diamond $[u,u+\delta_r]\times [v,v+\delta_r]$ is $\ell_r^2$. This shows that our operator $E_\Lambda^+$ is a valid discretisation of the continuum retarded Green function. Evidently the same will hold for the advanced Green operator. 

A different discretisation of $E^+_{\M_2}$ was considered by \cite{DS17}, namely
\begin{equation}
[E^{+}_{DSX}]_{pq}=\tfrac 1 2 C_{pq}\,,
\end{equation}
which takes the value  $\tfrac{1}{2}$ when $p\prec q$ and vanishes otherwise. This may be reproduced 
from our operator $P_\Lambda$ by changing $K_\Lambda$ to 
\begin{equation}
K_{DSX}=P_\Lambda E^{+}_{DSX} = \tfrac 1 2 P_{\Lambda}C=\tfrac 1 2(\1 - L + {\Lambda})C =
\tfrac 1 2(L - {\Lambda}) .
\end{equation}
Yet a further possibility would be to discretise $E^+_{\M_2}$ by
\begin{equation}
E^{+}_{trap} =\tfrac{1}{8}\1+ \tfrac{1}{4}(L  + CL)\,,
\end{equation}
with components $[E^{+}_{trap}]_{pq}$ equal to $\tfrac{1}{8}$ when $p=q$, $\tfrac{1}{4}$ for $q\in \dot{J}_{\M_2}^-(p)\setminus\{p\}$, $\tfrac{1}{2}$ for $q\in I_{\M_2}^-(p)$ and vanishing otherwise.
Here $\dot{J}_{\M_2}^-(p)$ is the boundary of the causal past $J_{\M_2}^-(p)$. The discretisation $E^{+}_{trap}$, which amounts to discretising~\eqref{eq:EretMink2} using the trapezium rule in $u,v$ coordinates,
corresponds to
\begin{equation}
K_{trap}=\tfrac{1}{8}(\1+L+\Lambda)\,.
\end{equation}
These definitions have the same continuum limit, $E^{+}_\M$, but correspond to different discretisations of the inhomogeneous wave equation. Consider the isolated diamond in Figure~\ref{fig:Causet_Diam}. Then:
\begin{itemize}
	\item setting $K=\frac{1}{2}\1$ corresponds to sampling the value of the test function $f$ on the diamond by taking its value only at the future-most point $f_p$;
	\item setting $K_{DSX}=\tfrac 1 2M^+$ samples $f$ by taking $f_{q_1}+f_{q_2}-f_{\Lambda(p)}$;
	\item setting $K_{trap}=\tfrac{1}{8}(\1+L+\Lambda)$ samples $f$ by taking $\tfrac{1}{4}(f_p+f_{q_1}+f_{q_2}+f_{\Lambda(p)})$.
\end{itemize}
This illustrates a basic fact that a continuum operator may have many valid discretisations, and indicates the flexibility introduced by the operator $K$. 

\begin{figure} 
	\includegraphics{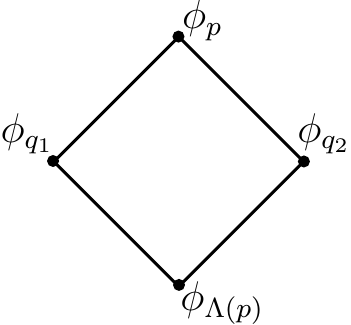} 
	\caption[A causal set diamond]{\label{fig:Causet_Diam} An isolated diamond from a regular diamond lattice.}
\end{figure}

\subsubsection{Edge effects at past infinity and the Cauchy problem}\label{sec:Cauchy}

In causal sets with a past boundary, i.e., points with no predecessors in the causal order, the form~\eqref{eq:WE1} of the wave equation given above should be reconsidered near to that boundary. In fact we have already anticipated this in our definition of $P_\Lambda$, which treats points in $C_2^-$ differently to those in the bulk. In the same way, one might expect that the operator $P_S$ might be modified for points in $C_3^-$, because these points do not have layer-$3$ predecessors.

The simplest possibility to take the edge effects into account is to first fix the discretised d'Alembertian operator $P$ outside  $C_k^-$ (the $k$-step past infinity), for some fixed $k$, using some discretization of the wave equation, and then set $(P\phi)_p=\phi_p$ for $p\in C_k^-$. This is already the case with $P_\Lambda$ for $k=2$, as defined by \eqref{NewdAl}. As for the right-hand side of the equation \eqref{eq:WE1}, the definition of $K_\Lambda$ should be modified so that the diagonal elements of $K_\Lambda$ would be $\tfrac{1}{2}$ except for entries corresponding to points in $C_2^-$, where the value would be $1$. As $P$ is then triangular with nonvanishing diagonal elements, this prescription ensures that $P$ remains invertible. The values of $\phi$ on $C_k^-$ are then treated as Cauchy data for the solution and we replace the wave equation \eqref{eq:WE1} by 
\begin{equation}\label{eq:inhom}
P\phi = K f + \phi^-\,,
\end{equation}	
where $\phi^-=S_k^-\phi$ is the projection of $\phi$ onto past infinity,  ($S_k^-$ was defined in~\eqref{p/finfunit}), and it is understood that the source $f$ should vanish in $C_k^-$. Note that $K\phi^-=\phi^-$, so one also has $P\phi=K(f+\phi^-)$. Thus our prescription allows the Cauchy data to be combined with the source in a natural way.
In these circumstances we will say that $P$ has a \emph{$k$-layer Cauchy problem}.
Recalling that Cauchy data for the scalar field in the continuum consists of both values and normal derivatives on a Cauchy surface, it is natural enough that the Cauchy data on a causal set involves values taken on at least two layers. 

Given the assumptions made on $P$ and $K$, the solution to~\eqref{eq:inhom} is
\begin{equation}
\phi = E^+ f+ E^+\phi^-.
\end{equation} 
Two special cases are of interest. First, the situation in which $\phi^-=0$,
in which case $\phi=E^+f$, in line with the continuum idea that the retarded Green function should produce solutions vanishing in the far past. Second, if $f=0$, $\phi=E^+\phi^-$ may be interpreted as the solution to the source-free equation with Cauchy data $\phi^-$, which thus takes the form
\begin{equation}\label{eq:hom}
P\phi = \phi^-
\end{equation}
and which will be called the homogeneous wave equation in the sequel.

By definition, solutions to the homogeneous wave equation are in bijection with the Cauchy data specified on $C_k^-$, which is just the space $\Ecal(C_k^-)$. Therefore the solution space is
\begin{equation}
\Ecal^+_{Sol}\doteq E^+ \Ecal(C_k^-)\,.
\end{equation}
The space of such solutions will be denoted $\Ecal^+_{Sol}$.
A particularly simple situation occurs if the solutions are also in bijection with data on future infinity, in which case
\begin{equation}
\alpha^+= S_k^+ E^+|_{\Ecal(C_k^-)}
\end{equation}
is an isomorphism $\alpha^+:\Ecal(C_k^-)\to \Ecal(C_k^+)$ that will be called the \emph{Cauchy evolution}.
Of course, this requires among other things that $C_k^\pm$ have equal cardinality.
 
As a slight digression we note that, in circumstances where the Cauchy evolution is defined, we can use it to compare the dynamics of the theory on two causal sets $\Ccal$ and $\widetilde{\Ccal}$ whose $k$-layer past and future infinity regions are in order-preserving isomorphism with each other, thus inducing linear isomorphisms 
$\iota^\pm: \Ecal(C_k^\pm)\to \Ecal(\widetilde{C}_k^\pm)$. Writing $\alpha^+$ and $\widetilde{\alpha}^+$ for the two Cauchy evolutions, the
\emph{relative Cauchy evolution} is a linear isomorphism on the solution space $\Ecal^+_{Sol}(\Ccal)$ defined by 
\begin{equation}
\mathtt{rce}(\phi)\doteq E^+ (\iota^-)^{-1}(\widetilde{\alpha}^+)^{-1} \iota^+ S_k^+\phi\,,
\end{equation}
which is an isomorphism; note that we also have the identity
\begin{equation}
S_k^+\mathtt{rce}(\phi) = \alpha^+ (\iota^-)^{-1}(\widetilde{\alpha}^+)^{-1} \iota^+ S_k^+\phi
\end{equation} 
in which the comparison of $\alpha^+$ and $\widetilde{\alpha}^+$ is apparent. Relative Cauchy evolution provides a way of discussing the response to changes in causal set geometry by reference to solutions of the wave equation on the unperturbed causal set. It was first introduced in \cite{BFV}, where it was formulated in locally covariant QFT on continuum spacetimes, for perturbations in the background metric, localised between two Cauchy surfaces. In that situation both backgrounds must be globally hyperbolic and share the same Cauchy surface topology (the Cauchy surfaces must be related by orientation-preserving diffeomorphism). By contrast, the causal set framework would permit a perturbed causal set that modelled a change of topology relative to the unperturbed one, provided that suitable identifications can be made in the past and future infinity regions. 

When it is defined, the relative Cauchy evolution can be pulled back to the map on observables, as follows. Consider $F\in \Fcal^+_{Sol}(\Ccal)$, where $\Fcal^+_{Sol}(\Ccal)$ denotes the space of functionals on $\Ecal^+_{Sol}$. Define 
\be
\mathrm{rce}(F)(\phi)\doteq F(\mathtt{rce}(\phi))\,.
\ee
This map describes the change to the observable $F$ resulting from the perturbation to the underlying causal set.
 
\subsection{Peierls bracket}

\subsubsection{Tentative definition}

The next step is to define a Poisson structure on the space of observables on a fixed causal set. 
We do this using the method of Peierls \cite{Pei}.

Using the commutator function \eqref{df:E} (in analogy to \cite{Pei}), we define the following bracket on $\Ci(\Ecal(\Ccal),\CC)$
\begin{equation}\label{Pbracket}
\{F,G\}=\sum_{i=1}^{N} \sum_{j=1}^{N}\frac{\delta F}{\delta \phi_i} E^{ij} \frac{\delta G}{\delta \phi_j}\,,
\end{equation}
where we used the Euclidean inner product to raise one of the indices in $E_i^{\ j}$ (compare with Remark~\ref{rem:prod}).
To simplify the notation, we will drop the summation symbols, and use the condensed notation $\dfrac{\delta F}{\delta \phi_i}\equiv F_{,i}$, so the formula above becomes
\be
\{F,G\}=F_{,i} E^{ij}G_{,j} = (F^{(1)})^T E G^{(1)}\,,
\ee
using the index-free notation in the last expression.  
For linear observables this reduces to
\be
\{F_g,F_h\}=g^TE h\,,
\ee
where $F_g,F_h\in\Xcal(\Ccal)$.
\begin{prop}\label{prop:poisson:free}
	The bracket \eqref{Pbracket} is a Poisson bracket, in particular, it satisfies the Jacobi identity: for any $ F,G,H\in\Ci(\Ecal(\Ccal),\CC) $:
	\begin{equation}\label{jacobi}
	\{F,\{G,H\}\}+\{G,\{H,F\}\}+\{H,\{F,G\}\}=0.
	\end{equation}
\end{prop}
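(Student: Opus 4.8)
The plan is to verify the Jacobi identity \eqref{jacobi} by a direct computation in the condensed index notation, exploiting the fact that the bracket \eqref{Pbracket} has the standard bilinear-in-derivatives form of a Poisson bracket determined by a constant (field-independent) antisymmetric matrix $E^{ij}$. First I would note the two structural ingredients that make this work: the matrix $E=(E^+)^T-E^+$ is \emph{antisymmetric}, i.e.\ $E^{ij}=-E^{ji}$, and its entries are \emph{constant}, in the sense that they do not depend on the field configuration $\ph\in\Ecal(\Ccal)$. Bilinearity, antisymmetry $\{F,G\}=-\{G,F\}$, and the Leibniz rule in each argument are then immediate from the definition, so the only substantive point is the Jacobi identity.

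The key computation is as follows. Writing $\{G,H\}=G_{,j}E^{jk}H_{,k}$ and applying $\partial/\partial\phi_i$, the Leibniz rule gives
\begin{equation}
\{G,H\}_{,i}=G_{,ij}E^{jk}H_{,k}+G_{,j}E^{jk}H_{,ki}\,,
\end{equation}
where the crucial point is that $E^{jk}$ carries no $\ph$-dependence and is therefore differentiated to zero. Hence
\begin{equation}
\{F,\{G,H\}\}=F_{,i}E^{il}\bigl(G_{,lj}E^{jk}H_{,k}+G_{,j}E^{jk}H_{,kl}\bigr)\,.
\end{equation}
I would then write out the two remaining cyclic terms $\{G,\{H,F\}\}$ and $\{H,\{F,G\}\}$ in the same way, obtaining six terms in total, each of the schematic form (first derivative)$\,\times\,E\times\,$(second derivative)$\,\times\,E\times\,$(first derivative). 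The second derivatives $F_{,il}$, $G_{,lj}$, $H_{,kl}$ are symmetric in their indices because mixed partial derivatives commute (the observables are smooth, by Definition~\ref{moreobs}).

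The concluding step is to show these six terms cancel in pairs. Grouping the terms by which functional appears with its second derivative, one finds, for instance, that the two terms containing $F_{,il}$ combine into an expression of the form $F_{,il}\,(\text{antisymmetric in }i,l)$, which vanishes because $F_{,il}$ is symmetric while the surrounding factor, built from two copies of the antisymmetric $E$, is antisymmetric under $i\leftrightarrow l$. The same argument disposes of the $G_{,\cdot\cdot}$ and $H_{,\cdot\cdot}$ pairs. I expect the main obstacle to be purely bookkeeping: keeping the summation indices and the cyclic permutation consistent so that it is transparent which pairs cancel. The only conceptual input beyond that is the antisymmetry and field-independence of $E$ together with the symmetry of second functional derivatives; once those are in place the identity is forced. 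In particular, the restriction to linear observables $\{F_g,F_h\}=g^TEh$, where the second derivatives vanish identically, is trivially consistent, which provides a useful sanity check before carrying out the general case.
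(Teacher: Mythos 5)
Your proposal is correct and takes essentially the same route as the paper's proof: expand the cyclic sum, apply the Leibniz rule so that each bracket yields second-derivative terms plus a term with derivatives of $E$ that vanishes by field-independence, and then cancel the remaining six terms using the antisymmetry of $E$. Your explicit pairing argument—contracting the symmetric second derivatives $F_{,il}$, $G_{,jl}$, $H_{,kl}$ against factors antisymmetric in those indices—merely spells out the cancellation mechanism that the paper's proof states more tersely.
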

\begin{proof}
	The argument is standard but we give it for completeness and for comparison with a later result. 
	The antisymmetry is obvious from the definition of $E$. It remains to prove the Jacobi identity. 
Expanding \eqref{jacobi} we find:
	\begin{equation}\label{key1}
	F,_iE^{ij}(G,_kE^{kl}H,_l),_j+G,_iE^{ij}(H,_kE^{kl}F,_l),_j
		+H,_iE^{ij}(F,_kE^{kl}F,_l),_j =0
	\end{equation}
	of which the first term equals  
	\begin{equation}\label{key2}
	F,_iE^{ij}G,_{kj}E^{kl}H,_l+F,_iE^{if}G,_kE,_j^{kl}H,_l +F,_iE^{ij}G,_jE^{kl}H,_{lj}\,.
	\end{equation}
	Due to the antisymmetry of $ E $ and because $E$ is independent of the field, all the terms in \eqref{key1} cancel out, so the Jacobi identity follows.
\end{proof}
\subsubsection{Justification of the formula for the bracket}
We now discuss a sense in which \eqref{Pbracket} corresponds to a discrete version of the Peierls bracket \cite{Pei}, by showing that it represents the difference between suitably defined retarded and advanced responses of the field equation to linear perturbations, supposing that the unperturbed equation has a $k$-layer Cauchy problem. In the original work of Peierls \cite{Pei} the idea is to define a covariant bracket, using the Lagrangian formalism, by studying the response of a given observable (say $F$) to a perturbation of the action by another observable, $G$. One compares two situations: 
\begin{itemize}
\item $F$ is evaluated at the solution to the perturbed problem, which coincides with the free solution in the far past (retarded response) and 
\item $F$ is evaluated at the solution to the perturbed problem, which coincides with the free solution in the far future (advanced response)
\end{itemize}
The bracket of $F$ and $G$ is the linear term (in $G$) of the difference between the retarded and the advanced response of $F$ to $G$.

We start with analyzing the situation, where we add a source $\lambda g$ to the theory. 
Heuristically,  this means adding a linear functional $\lambda \Phi_g$ to the action, where $g$ is supported outside $C^-_k$ and both $\lambda$ and $g$ are real. We will implement this by a direct modification to the field equation. The idea of Peierls is to study the effect of having such a perturbation on the observables. Let $\phi$ be a solution to the non-perturbed field equation \eqref{eq:inhom} with Cauchy data $\phi^-$ at ${C_k^-}$ and let $\phi_{\lambda}$ be a solution to the perturbed equation
\begin{equation}\label{eq:Peierlslinpert}
P\phi_{\lambda}=K (f+{\lambda} g)+\phi^-,
\end{equation}
with the same Cauchy data. Now consider another linear observable $\Phi_h$ with $h$ real. The retarded response operator $D^+_{\Phi_g}$ is, in direct analogy to \cite{Pei}, a transformation of observables defined by
\begin{equation}
\left(D^+_{\Phi_g} \Phi_h\right)(\phi) =\lim\limits_{{\lambda}\rightarrow 0}\frac{1}{{\lambda}}(\Phi_h(\phi_{\lambda})-\Phi_h(\phi)) \,.
\end{equation}
In this case, Eq.~\eqref{eq:Peierlslinpert} gives
\begin{equation}
\left(D^+_{\Phi_g} \Phi_h\right)(\phi) =
\lim\limits_{{\lambda}\rightarrow 0}\frac{1}{{\lambda}}h^T( E^+(f+\lambda g)-E^+(f)) =h^TE^+ g\,,
\end{equation}
which is independent of the solution $\phi$, i.e., $D^+_{\Phi_g} \Phi_h$ is a constant functional.
Just as we defined the advanced Green function to be the transpose of the retarded version, we now 
define the advanced response by reversing the roles of the perturbation and the observable used to test the response
\begin{equation}
D^-_{\Phi_g} \Phi_h\doteq D^+_{\Phi_h} \Phi_g\,.
\end{equation}
With this definition, 
\begin{equation}
\left(D^-_{\Phi_g} \Phi_h\right)(\phi) = g^TE^+ h = h^T (E^+)^T g = h^T E^- g,
\end{equation}
so the Peierls bracket is
\begin{align}
\{\Phi_g,\Phi_h\}_{Pei}(\phi) &=D^+_{\Phi_g} \Phi_h-D^-_{\Phi_g} \Phi_h =h^T(E^{+}-E^{-}) g 
\notag \\
&= g^TEh\,,
\end{align}
in agreement with our definition \eqref{Pbracket}.

Turning to nonlinear perturbations and nonlinear observables, let $F\in\Ci(\Ecal(\Ccal),\CC)$ (some examples appear at the end of this subsection). Heuristically, perturbing the action by $\lambda F$ has the effect of perturbing the field equation by $\lambda \frac{\delta F}{\delta\phi_p}(\phi)$. Taking \eqref{eq:hom} as the starting point, this suggests that the interacting field equation should take the form:\footnote{Note that since $P\phi$ is heuristically \textit{minus} the variation of the free action, in order to implement the interaction, we have to \textit{subtract} the variation of $F$ on the left-hand side (or add it on the right-hand side). This differs from the usual convention used in \cite{DF02,Book,HR}, where $P=-\Box$, rather than $P=\Box$, as we assume in the present work.}
\be\label{eq:Interacting}
{P}\phi=\lambda K (\tfrac{\delta F}{\delta\phi}(\phi))+\phi^-\,,
\ee
where $F$ is supported away from the past infinity $C_k^-$. The matrix $K$ is there for consistency with linear perturbations (compare with \eqref{eq:Peierlslinpert}). This is an artifact of the way we choose to discretize the interaction term. As for the notion of support of $F$,  we adopt the following definition
\begin{equation}
\supp F \doteq \{p\in\Ccal| \exists \phi\in\Ecal(\Ccal),~\lambda\in\CC~\text{s.t.}~F(\phi+\lambda\delta_p)\neq F(\phi) \}\,.
\end{equation}
where $\delta_p(q)=\delta_{pq}$. This is a straightforward generalization of the notion used in continuum. An obvious consequence of this definition is that if $p\notin \supp F$, then $\frac{\delta F}{\delta\phi_p}(\phi)=0$. Conversely, we can express the support as
\be
\supp F=\bigcup_{\phi\in\Ecal(\Ccal)} \supp\left(\frac{\delta F}{\delta\phi}(\phi)\right)\,,
\ee
where $ \supp(\frac{\delta F}{\delta\phi}(\phi))$ consists of points $p$, for which $ \frac{\delta F}{\delta\phi_p}(\phi)$ is non-zero.

Let $F,G\in\Ci(\Ecal(\Ccal),\CC)$ be supported away from past infinity. Then the retarded response for the discretized wave equation \eqref{eq:inhom} is
\be
D^+_{F}(G)=G_{,i} (E^{+})^{ij}F_{,j}\,,
\ee
and defining $D^-_G(F)=D^+_F(G)$, as before, the bracket \eqref{Pbracket} agrees with the Peierls construction.

Note that in general $\frac{\delta F}{\delta\phi_p}(\phi)$ can be very non-local, i.e. it can depend on values of $\phi$ at points other than $p$. This motivates the following definition:
\begin{df}
	A functional $F\in\Ci(\Ecal(\Ccal),\CC)$ is called local if $\frac{\delta F}{\delta\phi_p}(\phi)$ is a function of $p$ and $\phi(p)$ (only).
\end{df}
Linear functionals considered in the previous section are obviously local. Other examples are local polynomials which are finite sums of terms with the form 
\begin{equation}\label{Local:pol}
F(\phi)=(\phi*\dots*\phi)^i g_i\,,
\end{equation}
using the Hadamard product~\eqref{eq:Hadamardproduct}. 
\subsection{Free Dynamics}
Let us define the Poisson algebra assigned to a causal set $\Ccal$ as
\be
\fP(\Ccal)\doteq (\Fcal,\{.,.\})\,,
\ee
which is the causal set counterpart of the \textit{off-shell} classical algebra in continuum pAQFT (equations of motion are not imposed). Typically, one would now quotient it by the ideal generated by the equations of motion, to obtain the on-shell algebra. The potential problem that arises in the causal set context is that the retarded Green function is the inverse to $P$ but its transpose is not (unless in very special cases). As a result, in general, $EP=(E^--E^+)P\neq 0$, which means that the Peierls bracket would not be well-defined on the quotient algebra.

Hence, instead of quotienting by the ideal generated by the equations of motion, we propose to quotient $\fP(\Ccal)$ by the ideal generated by functionals $F$ with the property
\be\label{quotinet}
E^{ij} F_{,j}\equiv 0\,,
\ee
denoting this quotient by $\widetilde{\fP}(\Ccal)$. Note that on $\widetilde{\fP}(\Ccal)$ the Poisson bracket $\{.,.\}$ is non-degenerate.

To see how the above quotient is related to implementing the dynamics, recall that in continuum we have the exact sequence \cite{Baer}:
\be
0\rightarrow\Dcal(M)\xrightarrow{P}\Dcal(M)\xrightarrow{E} \Ecal_{\rm sc}(M)\xrightarrow{P}  \Ecal_{\rm sc}(M)\,,
\ee
where $M$ is a globally hyperbolic spacetime, $P$ is a normally hyperbolic operator, $\Dcal(M)$ and $\Ecal_{\rm sc}(M)$  are space of functions with compact and spatially compact support, respectively. We also know that the space of linear observables is isomorphic to $\Dcal(M)$ by means of
\be
\Dcal(M)\ni g\mapsto \Phi_g;\ \Phi_g(\phi)=\int \phi(x)g(x) d\mu
\ee
Hence quotienting the algebra of regular functionals by the ideal generated by elements of the form $\Phi_{Pf}$, $f\in \Dcal(M)$ is the same as quotienting by the ideal generated by linear observables $\Phi_g$ with the property $ g\in \ker E$.

This result extends to more singular functionals by continuity. Clearly, our condition \eqref{quotinet} is the causal set analogue of quotienting by the kernel of $E$. This condition is also the natural generalization of the condition proposed in \cite{Sorkin} for linear observables. Sorkin argues that the kernel of $E$ in the causal set situation is typically very small. There are many very small eigenvalues of $E$, but only a few of them are exactly 0. This leads Sorkin to conclude that the equations of motion on a causal set can be implemented only in approximate sense \cite{Sorkin}. We hope  to address this point in future work.

\subsection{Interacting theory}
Next we want to introduce the interaction. We will use the framework proposed in \cite{DF02} and further developed in \cite{HR}.
\subsubsection{Interacting and linearized interacting equations of motion}
Let $V\in\Ci(\Ecal(\Ccal),\CC)$, where $|\Ccal|=N$  and let ${\lambda}$ be the coupling constant. We work perturbatively, so the space of observables is now extended to include formal power series in the coupling constant ${\lambda}$, i.e. it becomes $ \Fcal(\Ccal)[[{\lambda}]]$. 
The interacting field equations are given by \eqref{eq:Interacting}, which we can also write as
\be
{P}\phi-\lambda K (V^{(1)}(\phi))=\phi^-\,.
\ee
The interacting field equations linearized about $\phi\in\Ecal(\Ccal)$, are
\be\label{eq:inter:linear}
P\psi-{\lambda} KV^{(2)}(\phi) \psi=\psi^-\,.
\ee
where $V^{(2)}(\phi)$ is an $N\times N$ matrix with components
\be
(V^{(2)}(\phi))_{ij}=V_{,ij}(\phi) \,.
\ee

\subsubsection{Interacting Poisson bracket}
The prescription for the Poisson bracket  of the interacting theory with the interaction $\lambda V$ is given by
\be\label{Pbracket:int}
\left\{G,H\right\}_{{\lambda} V}\doteq G_{,i} E_{{\lambda} V}(\phi)^{ij} H_{,j}\,,
\ee
where $E_{{\lambda} V}(\phi)=(E^+_{{\lambda} V}(\phi))^T-E^+_{{\lambda} V}(\phi)$, and $E^{+}_{{\lambda} V}(\phi)$ is the retarded Green function for the interacting linearized field equations \eqref{eq:inter:linear}. Starting from the free Green function $E^{+}$, we construct the interacting one using the Neumann series:
\be
E^{+}_{{\lambda} V}=E^{+}+\sum_{n=1}^\infty{\lambda}^n E^{+} \left(V^{(2)} E^{+}\right)^n\,.
\ee
\begin{prop}\label{prop:poisson:int}
	The bracket \eqref{Pbracket:int} is a Poisson bracket, in particular, it satisfies the Jacobi identity: for any $ F,G,H\in\Ci(\Ecal(\Ccal),\CC) $, one has
	\begin{equation}\label{jacobi:int}
	\{F,\{G,H\}\}+\{G,\{H,F\}\}+\{H,\{F,G\}\}=0.
	\end{equation}
\end{prop}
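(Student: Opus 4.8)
The plan is to follow the template of the proof of Proposition~\ref{prop:poisson:free} as far as possible, isolating the one genuinely new feature: unlike the free commutator matrix $E$, the interacting matrix $E_{\lambda V}(\phi)$ depends on the configuration $\phi$. Antisymmetry of the bracket is again immediate, since $E_{\lambda V}=(E^+_{\lambda V})^T-E^+_{\lambda V}$ is antisymmetric by construction, so only the Jacobi identity requires work. I would first recall the general fact that, for any bracket of the form $\{G,H\}=G_{,i}P^{ij}H_{,j}$ with $P$ antisymmetric, the left-hand side of \eqref{jacobi:int} expands into two groups of terms: those in which the outer derivative lands on a first derivative of $F$, $G$ or $H$ (producing second derivatives such as $G_{,kj}$), and those in which it differentiates $P$ itself. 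The first group cancels in the cyclic sum using only the antisymmetry of $P$ — this is exactly the cancellation already carried out in the free case, and a short relabelling of dummy indices shows it goes through verbatim here with $P=E_{\lambda V}$.

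What remains is the contribution in which $\partial_j$ acts on $E_{\lambda V}^{kl}(\phi)$, which vanished identically in the free proof but is now nonzero. The key computational input I would establish is a closed formula for this derivative. Differentiating the defining relation $E^+_{\lambda V}=E^+ + \lambda E^+ V^{(2)}(\phi)E^+_{\lambda V}$ with respect to $\phi_l$ (noting that $E^+$, $K$ and $P$ are field-independent) gives $(\1-\lambda E^+ V^{(2)})\,\partial_l E^+_{\lambda V}=\lambda E^+ (\partial_l V^{(2)})E^+_{\lambda V}$, and since $(\1-\lambda E^+ V^{(2)})^{-1}E^+=E^+_{\lambda V}$ one obtains
\begin{equation}
\partial_l E^+_{\lambda V}=\lambda\, E^+_{\lambda V}\, V^{(3)}_{\,\cdot\,\cdot\, l}\, E^+_{\lambda V},
\end{equation}
where $V^{(3)}_{\,\cdot\,\cdot\, l}$ is the matrix with entries $V_{,abl}(\phi)$, totally symmetric in $a,b,l$. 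Transposing, and using this symmetry, yields the corresponding formula for $\partial_l E_{\lambda V}=\partial_l(E^+_{\lambda V})^T-\partial_l E^+_{\lambda V}$.

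The final step is to substitute this derivative into the surviving part of the cyclic sum, namely the totally cyclic combination $E_{\lambda V}^{ij}\partial_j E_{\lambda V}^{kl}+E_{\lambda V}^{kj}\partial_j E_{\lambda V}^{li}+E_{\lambda V}^{lj}\partial_j E_{\lambda V}^{ik}$, and to show that it vanishes. After substitution every term is a contraction of three copies of $E^+_{\lambda V}$ (or their transposes) with a single totally symmetric $V^{(3)}$, and I expect the vanishing to follow by matching the total symmetry of $V^{(3)}$ against the antisymmetry of $E_{\lambda V}$ under the cyclic antisymmetrisation. This bookkeeping is the main obstacle: the number of terms is large and one must track carefully which index of each $E^+_{\lambda V}$ is contracted with $V^{(3)}$ and which with an external first derivative such as $G_{,k}$.

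A cleaner, and perhaps preferable, route that avoids this explicit cancellation is to read the derivative identity above as the statement that $E_{\lambda V}$ is the image of the free bivector $E$ under the classical retarded M{\o}ller map associated with $\lambda V$ (in the spirit of \cite{DF02,HR}): since that map is a (formal) Poisson diffeomorphism relating the free and interacting brackets, and the free bracket already satisfies the Jacobi identity by Proposition~\ref{prop:poisson:free}, the interacting bracket inherits it automatically. I would present the direct computation as the primary argument and remark on the M{\o}ller-map interpretation as the conceptual explanation of why it must succeed.
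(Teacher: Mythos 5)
Your proposal follows essentially the same route as the paper's proof: antisymmetry is immediate, the second-derivative terms cancel in the cyclic sum by antisymmetry of $E_{\lambda V}$ alone, and the surviving terms involve the derivative formula $(E^{+}_{\lambda V})^{jk}_{,l}=\lambda(E^{+}_{\lambda V})^{jm}V_{,lmn}(E^{+}_{\lambda V})^{nk}$, after which the twelve resulting terms cancel in pairs using the antisymmetry of $E_{\lambda V}$ and the total symmetry of $V_{,lmn}$ (the paper delegates this bookkeeping to Appendix~B of~\cite{Jakobs}, and your expectation that it goes through is correct). The only differences are cosmetic and favourable: you actually derive the derivative formula from the resolvent identity $E^+_{\lambda V}=E^++\lambda E^+V^{(2)}E^+_{\lambda V}$, which the paper merely states, and your closing M{\o}ller-map remark matches the identity $\{F,G\}_{\lambda V}=r_{\lambda V}^{-1}\{r_{\lambda V}F,r_{\lambda V}G\}$ that the paper records immediately after the proposition.
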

\begin{proof}
The proof is analogous to that of Proposition~\ref{prop:poisson:free}. The only difference is in the proof of the Jacobi identity. Expanding the first term in \eqref{jacobi:int}, we obtain 
	\begin{equation}\label{eq:jacobi:terms}
	F_{,i}E_{\lambda V}^{il}G_{,jl}E_{\lambda V}^{jk}H_{,k}+
	F_{,i}E_{\lambda V}^{il}G,_j(E_{{\lambda} V})_{,l}^{jk}H_{,k} 
	+F_{,i}E_{\lambda V}^{il}G_{,j}E_{\lambda V}^{jk}H_{,kl}.
	\end{equation} 
	Due to the antisymmetry of $E_{\lambda V}$, only the central term of the expansion of each bracket remains as the others cancel across all three expanded brackets. The derivatives of the retarded Green function are:
		\begin{equation}\label{eq:Green:var}
		(E^{+}_{\lambda V})^{jk}_{,l}=\lambda(E^{+}_{{\lambda} V})^{jm}V_{,lmn}(E^{+}_{\lambda V})^{nk}\,,
		\end{equation}
		where $V$ is the interaction term. 	Inserting this into \eqref{eq:jacobi:terms} 
		and expanding each $E_{\lambda V}=(E^+_{\lambda V})^T-E^+_{\lambda V}$, we obtain twelve terms altogether. These cancel in pairs due to the antisymmetry of $E_{\lambda V}$ and the fact that $V,_{lmn}$ is symmetrical with respect to its indices (see the Appendix~B of~\cite{Jakobs} for more details of the proof).
\end{proof}
We can also introduce the retarded M{\o}ller map, which maps solutions to free discretised retarded field equations to solutions to interacting field equations: \eqref{eq:Interacting}
\be
\label{YF equation}
{\mathtt r}_{{\lambda} V}(\phi)=\phi+\la E^{+}V^{(1)}({\mathtt r}_{\la V}(\phi))\,,
\ee
Its inverse is given by
\be
\label{inverse Moller}
{\mathtt r}^{-1}_{\la V}(\phi) = \phi - \la E^{+} V^{(1)}(\phi) \,.
\ee
The retarded M{\o}ller map on configurations induces the corresponding map on observables, 
\be
(r_{\la V}F)(\phi)\doteq F\circ {\mathtt r}_{\la V}(\phi)\,,\\
\ee
where $F\in \Fcal(\Ccal)[[{\lambda}]]$.  
Analogously to the continuum case, the Peierls bracket \eqref{Pbracket:int} satisfies:
\be
\{F,G\}_{{\lambda} V}=r_{{\lambda} V}^{-1}\{r_{{\lambda} V}F,r_{{\lambda} V}G\}\,.
\ee
\section{Quantum Theory}
So far, we have constructed a Poisson algebra of observables for the free and interacting classical field theory on a causal set. We now  pass to the quantum theory using the method of deformation quantization, following ideas applied in the context of continuum field theory by \cite{DFloop,DF,BDF}. The main idea behind this approach is to phrase the problem of quantization as the mathematical problem of finding a non-commutative $\hbar$-dependent product $\star_\hbar$, such that, given a Poisson algebra (with a commutative product $\cdot$ and Poisson bracket $\{.,.\}$):
\begin{align*}
F\star_\hbar G&\xrightarrow{\hbar\rightarrow 0} F\cdot G\,,\\
\tfrac{1}{i\hbar}\left(F\star_\hbar G-G\star_\hbar F\right) & \xrightarrow{\hbar\rightarrow 0} \{F, G\}\,.
\end{align*}
In our case, $F$ and $G$ are smooth functions on the configuration space $\Ecal(\Ccal)$. Furthermore, the $\star_\hbar$ product is required to be of the form:
\be\label{eq:general:star}
F\star_\hbar G=\sum\limits_{n=0}^\infty\hbar^n B_n(F,G)\,,
\ee
where $B_n$ are differential operators on $\Ecal(\Ccal)$. For simplicity, we focus on finite causal sets, so that for a causal set of size $N$, $\Ecal(\Ccal)\cong \RR^N$ and the differential calculus on it is well understood. We can then express
\begin{equation}\label{eq:Bn}
B_n(F,G)=\sum_{i_1=1}^N\dots\sum_{j_n=1}^N  \frac{\delta^n F}{\delta \phi_{i_1}\dots \delta \phi_{i_n}}\  (B_n)^{i_1,\dots,i_n,j_1,\dots,j_n}\   \frac{\delta^n G}{\delta \phi_{j_1}\dots \delta \phi_{j_n}}
\end{equation}
The higher orders in $\hbar$ present in \eqref{eq:general:star} distinguish this approach from the Dirac quantization, so one avoids contradiction with the Groenewald-van Hove no-go theorem \cite{Groenewold,vanHove}.

Deformation quantization has the advantage that the construction of the algebra of observables can be done completely abstractly, without the need for existence of a distinguished state (e.g., a vacuum) and without invoking Fock space methods. At a later stage, one can then seek suitable states on the abstract algebra and use these to form Hilbert space representations by the GNS construction. 

The choice of states has both a mathematical and a physical aspect. There is a precise mathematical definition of a state, as a positive normalised linear functional on a $*$-algebra (and this can be extended to algebras of formal power series, as will be described below). However not all such linear functionals need qualify as physically relevant. For QFT in continuum curved spacetimes it is known that it is not possible to single out a unique distinguished state that is locally and covariantly determined by the geometry, assuming certain additional physically motivated assumptions -- see~\cite{spass1} for a formal proof and \cite{Fewster-artofstate:2018} for a review. Nonetheless, there are circumstances in which a distinguished global state (or distinguished `in' and `out' states) with good properties (specifically, the Hadamard condition) may be determined~\cite{DMP09,GerWro:2017,DerSiems:2019}. One proposal for a global geometrically determined state, arising from the causal set programme, is the Sorkin-Johnston (SJ) state \cite{Johnston,SJ12}. In the continuum this is known to have certain problems; in particular, it generally fails to be Hadamard~\cite{FV12,FV13}.
Nonetheless, SJ states retain interest as a specific construction of a state where particular examples are otherwise sparse; furthermore, there are softened versions~\cite{BrumF14,FL15,FrancisThesis} of the SJ construction in which the Hadamard property is restored at the price of losing uniqueness. 

As the definition of Hadamard states centres on the UV behaviour of their $n$-point functions, it may seem that these problems are vitiated in discrete spacetimes. This is not quite so, because ideally one would like to understand the class of states that can have Hadamard continuum limits; and if the causal set has infinitely many elements then there is still the possibility that different states could yield inequivalent representations of the CCR algebra. The situation is of course better still in the case of finite causal sets, our main focus, where the Stone-von Neumann theorem ensures that all sufficiently regular representations are unitarily equivalent up to multiplicity. In this situation any pure state will lead to the same Hilbert space representation. Therefore the SJ state is a valid starting point for a more refined discussion. As we will show, it is mathematically simple to describe, and closely related to our choice of the Euclidean inner product in Sec.~\ref{sec:kinematics}. 
Alternative inner products produce states that may be seen as precursors of the softened SJ states of~\cite{BrumF14,FL15,FrancisThesis}.  See Remark~\ref{rem:states} for some further comments.

A great advantage of the algebraic viewpoint is that it is much more straightforward
to introduce interactions than in constructions based on Fock space. Applying the ideas of \cite{BDF,Vienna,HR}, we will show how to pass to the interacting theory, using further deformation of the non-commutative product $\star_\hbar$.

\subsection{Construction of the Quantum Algebra}\label{sec:free}
\subsubsection{Exponential products}
Deformation quantization of the classical theory starts with the free theory, i.e. a linearized wave equation \eqref{eq:hom} and its retarded Green function $E^+$. From this we obtain $E$ and the Peierls bracket. Let us for the moment restrict ourselves to the subspace of $\Fcal(\Ccal)$ that consists of smooth functionals $F$ with the property that there exists $N\in\NN$ such that $F^{(n)}(\ph)=0$ for all $n>N$, $\ph\in\Ecal(\Ccal)$. We call such functionals \textit{polynomial} and denote the corresponding vector space by $\Fcal_{\pol}(\Ccal)$.
This space 
can be equipped with various types of noncommutative product, of which we will describe two. First, the \emph{Moyal--Weyl product} is
\be
\label{exponential form}
F\star G \doteq m\circ e^{\frac{1}{2}i\hbar D_E}(F\otimes G)  \,,
\ee
where $F,G\in\Fcal_{\pol}(\Ccal)$,
 $m$ is the multiplication on $\Fcal_\pol(\Ccal)$ induced by pointwise multiplication of functionals in $\Fcal_{\pol}(\Ccal)$ and for a given $N\times N$ matrix $M$,
\be
D_M\doteq M_{ij} \frac{\delta}{\delta\phi_i} \frac{\delta}{\delta\phi_j}\equiv
\left<M,\tfrac{\delta}{\delta \phi}\otimes \tfrac{\delta}{\delta \phi}\right>\,,
\ee
maps $\Fcal_{\pol}(\Ccal)^{\otimes 2}\to \Fcal_{\pol}(\Ccal)^{\otimes 2}$. With the appropriate choice of units, $\hbar$ is just a number and can be set equal to 1.
 We obtain a non-commutative algebra $\fA(\Ccal)\doteq (\Fcal_{\pol}(\Ccal),\star)$, which, as in the classical case, is the analogue of the continuum off-shell algebra.
Second, the \emph{Wick product} is defined by 
\be\label{Wick}
F\star_H G \doteq m\circ e^{\hbar D_W}(F\otimes G)  \,,
\ee
where 
\be\label{eq:W:decomp}
W=\frac{i}{2}E +H\,,
\ee
is a complex hermitian matrix that has the physical interpretation of the two-point function of a quasifree state on $\fA(\Ccal)$. Denote $\fA_H(\Ccal)\doteq (\Fcal_{\pol}(\Ccal),\star_H)$.

We require $W$ to have the following properties, which then constrain $H$:
\begin{enumerate}[label={\bf W\arabic*)}]
\item  $E=2\operatorname{Im}\,W$, i.e., $H=\operatorname{Re}\, H$ (recall that $E$ is real by definition).
\item $W$ is a positive definite matrix, meaning that $f^\dagger  W f\geq 0$, where $f^\dagger$ is the hermitian conjugate of $f\in\CC^N$.
\item $\ker W\subset \ker E$ (a proxy for $W$ solving the equations of motion)
\end{enumerate}
This is almost the same as in the continuum, but modifying the condition that $W$ solves the equations of motion. This is related to the general difficulty with going on-shell discussed before.
 \begin{rem}
	Note that both $\star$ and $\star_H$ are of the form \eqref{eq:general:star} with $(B_{n})^{i_1,\dots,i_n,j_1,\dots,j_n}=(i/2)^nE_{i_1j_1}\dots E_{i_nj_n}$ for the former and $(B_{n})^{i_1,\dots,i_n,j_1,\dots,j_n}=W_{i_1j_1}\dots W_{i_nj_n}$ for the latter.
\end{rem}
Physically, passing from $\star$ to $\star_H$ corresponds to normal-ordering with respect to the quasifree state determined by $W$. In deformation quantization, the products $\star$ and $\star_H$ are regarded as equivalent, because they are related by a \textit{gauge transformation} $\alpha_{H}: \fA(\Ccal) \rightarrow \fA_H(\Ccal)$, which is given by
\be
\alpha_{H} \doteq e^{\frac\hbar2\Dcal_H}\,,
\ee
where 
$$\Dcal_H(F) \doteq H^{ij} F_{,ij} \equiv \left<H,\tfrac{\delta^2 F}{\delta\phi^2}\right>\,,$$
More explicitly, we can write
\be
F\star_H G=\alpha_H(\alpha_H^{-1} F\star \alpha_H^{-1} G )\,,
\ee
and we identify $\alpha_H^{-1}(F)\equiv \no{F}\nolimits_H$ as the normal (Wick) ordering operation. Applying $\alpha_{H}^{-1}$ to a local functional $F\in \Fcal_{\loc}(\Mcal)$ in continuum is analogous to normal ordering using the point-splitting prescription.
\begin{exa}[Minkowski spacetime]
Consider the example of continuum free scalar field theory on Minkowski spacetime $\M$. Let $\Phi_f$, $\Phi_g$ be smeared fields, as defined by \eqref{eq:smeared:cont}. For the Klein-Gordon operator $P=\Box+m^2$, there exist the unique retarded and advanced Green functions  $E^{\pm}$ and their difference is the commutator function $E=E^--E^+$. With the star product taken to be $\star_H$, for any choice of the symmetric part $H$, we have 
\[
	[\Phi_f,\Phi_{f'}]_{\star_H}=\Phi_{f}\star_H \Phi_{f'}-\Phi_{f'}\star_H \Phi_{f}=i\hbar\left<E,f\otimes f'\right>\,.
	\]
Formally, we can consider $\Phi_x\doteq \Phi(\delta_x)$, where $\delta_x$ is the Dirac delta supported at some $x\in \M$, and we find:
\[
[\Phi_x,\Phi_y]_{\star_H}=i\hbar E(x,y)\,.
\]
Now fix an inertial coordinate system and consider the $t=0$ Cauchy surface. Let ${\bf x}$ and ${\bf y}$ denote spacelike vectors on this surface. From properties of $E$ follows that:
\begin{align*}
[\Phi_{(0,{\mathbf{x}})},\Phi_{(0,{\mathbf{y}})}]_\star&=\Delta(0,{\mathbf{x}}; 0,{\mathbf{y}})=0\\
[\Phi_{(0,{\mathbf{x}})},\dot{\Phi}_{(0,{\mathbf{y}})}]_\star&=\partial_{y^0}\Delta(0,{\mathbf{x}}; 0,{\mathbf{y}})=i\hbar\delta(\mathbf{x}-\mathbf{y})\,,
\end{align*} 
where dot denotes the time derivative. These are the standard equal-time commutation relations, so we have recovered the usual formulas from the deformation quantization approach.
\end{exa}
In order to find a specific choice of $W$, we will follow the ideas of \cite{Johnston,SJ12} and take $W$ as the Sorkin-Johnston (SJ) two-point function. We recall, that according to \cite{Sorkin}, $W$ is the unique $N\times N$ matrix satisfying the following properties:
\begin{enumerate}[label={\bf SJ\arabic*)}]
	\item $W-\overline{W}=i E$, where bar denotes the complex conjugation,
	\item $W\geq 0$,
	\item $\overline{W}W=0$\,.
\end{enumerate}
It was shown in \cite{Sorkin} that the unique $W$ satisfying the axioms above is given by
\be
W=\frac{1}{2}(iE +\sqrt{-E^2})\,,
\ee
where the square root is the unique positive semi-definite square root of the positive semi-definite matrix $(iE)^2=(iE)(iE)^\dagger$.
\begin{rem}\label{rem:states}
	Note that we have made implicit use of the Euclidean inner product
	in order to identify the 2-point function with a matrix. As already indicated in Remark~\ref{rem:prod}, this choice is to some extent arbitrary and could be changed. Since {\bf SJ3)} crucially depends on this choice, a different auxiliary inner product would produce a different 2-point function and hence \textit{a different state}. The significance of this fact becomes acute in the continuum. Consider the real scalar field on $(M,g)$ with $M=(-\tau,\tau)\times \Sigma$ being an ultrastatic slab spacetime. Choosing the inner product on $\Ci_c(M,\RR)$ to be the one induced by the volume form $d\mu_g$ implies that the unique $W$ solving {\bf SJ1)-3)} is the 2-point function of the SJ state which is known not to be Hadamard in general~\cite{FV12,FV13}. However, replacing $d\mu_g$ by $\frac{1}{\rho}d\mu_g$ produces $W_\rho$, which is a 2-point function of a \textit{Hadamard state}, if $\rho$ is an appropriately chosen smooth function on the interval $(-\tau,\tau)$, tending to zero at both endpoints (see \cite{FrancisThesis} for details).
	\end{rem} 
\subsubsection{Formal power series}\label{sec:formal power series}
Going beyond the polynomial observables requires some caution, since the power series defining the star product might not converge. One possibility is to consider analytic functions (e.g. exponentials, as discussed in the next section) or to extend the framework to allow also formal power series. The latter is actually necessary if we want to introduce interactions (see Section~\ref{sec:inter}).

Let $\Fcal(\Ccal)[[\hbar]]$ denote the vector space consisting of formal power series in the formal parameter $\hbar$, with coefficients in $\Fcal(\Ccal)$. Formulas \eqref{exponential form} and \eqref{Wick} can be easily adapted to this setting, but now we interpret $\hbar$ as a formal parameter rather than a number. The resulting algebras will be denoted by $\fA^{\hbar}\doteq (\Fcal(\Ccal)[[\hbar]],\star)$ and $\fA_H^{\hbar}\doteq (\Fcal(\Ccal)[[\hbar]],\star_H)$, respectively.

\subsection{Weyl Algebra}
The algebra of observables $\fA(\Ccal)$  introduced in the previous section can be equipped with a topology that makes it into a topological unital $*$-algebra. Such algebras are typically represented in Hilbert spaces by unbounded operators. If we want to work with bounded operators instead, a suitable candidate for the algebra of observables is provided by the Weyl algebra, defined by exponentiating linear functionals. In this section we treat $\hbar$ as a number, rather then a formal parameter.

Recall that elements of  $\RR^N$ are identified with  linear real observables on a causal set $\Ccal$ of size $N$ by means of \eqref{causet obs}. The Poisson bracket $\{.,.\}$ on the space of observables is given by \eqref{Pbracket}, so for $g,h\in \RR^N$ we have
\be
\{\Phi_g,\Phi_h\} =\left<g,E h\right>\doteq \sigma(g,h)\,.
\ee

\begin{df}
	Each linear real observable $\Phi_g$, $g\in\RR^N$, defines a 
	Weyl functional $\Wcal(g)\in\Ci(\Ecal(\Ccal),\CC)$ by $\Wcal(g)=e^{i \Phi_g}$.
\end{df}

\begin{prop}
	The Weyl functionals satisfy the Weyl commutation relations 
	\be
	\Wcal(g)\star \Wcal(\tilde{g}) = e^{-\frac{i\hbar}{2}\sigma(g,\tilde{g})}\Wcal(g+\tilde{g})\,,
	\ee
	with respect to the star product, and $\Wcal(g)^*=\Wcal(-g)$. 
	\begin{widetext}
	\begin{proof}
		This is a simple computation. The functional derivative of the operators in the direction of an arbitrary field in the configuration space $ h\in\Ecal(C) $ is:
		\begin{align}
		\left\langle(\Wcal(g))^{(1)}(\phi),h\right\rangle&=\left.\dfrac{d}{d{\lambda}}\exp\left(i\sum_{j=1}^{N}g_i(\phi_j+{\lambda} h_j)\right)\right|_{{\lambda}=0}=\left(i\sum_{j=1}^N g_jh_j\right)\Wcal(g)(\phi)
		\end{align}
		hence:
		\begin{equation}\label{nthweyl}
		\left\langle(\Wcal(g))^{(n)}(\phi),h^{\otimes n}\right\rangle=\left(i\sum_{j=1}^N g_jh_j\right)^n\Wcal(g)(\phi).
		\end{equation}
		Therefore, the following formula is obtained from the star product:
		\begin{align}
		\Wcal(g)\star \Wcal(\tilde{g})&=\sum_{n=0}^{\infty}\frac{\hbar^n}{n!}\left\langle(\Wcal(g))^{(n)},\left(\frac{iE}{2}\right)^{\otimes 	n}(\Wcal(\tilde{g}))^{(n)}\right\rangle \nonumber \\
		&=\sum_{n=0}^{\infty}\left(\frac{i\hbar}{2}\right)^n\frac{(-1)^n}{n!}\left(\sum_{i,j=1}^Ng_iE^{ij}\tilde{g}_j\right)^n\Wcal(g+\tilde{g}) \nonumber \\
		&=\exp\left(-\frac{i\hbar}{2}\sum_{i,j=1}^Ng_iE^{ij}\tilde{g}_j\right)\Wcal(g+\tilde{g}) \nonumber \\
		&=e^{-\frac{i\hbar}{2}\{F_g,F_{\tilde{g}}\}}\Wcal(g+\tilde{g})= e^{-\frac{i\hbar}{2}\sigma(g,\tilde{g})}\Wcal(g+\tilde{g})\,,
		\end{align}
		as required.
	\end{proof}
\end{widetext}
\end{prop}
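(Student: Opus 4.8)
The plan is to exploit the fact that each Weyl functional is the exponential of a \emph{linear} functional, so that all of its higher functional derivatives are proportional to the functional itself with purely scalar prefactors. This collapses the exponential series defining the star product into an ordinary scalar power series, which sums to the claimed exponential.

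First I would compute the functional derivatives of $\Wcal(g)=e^{i\Phi_g}$. Since $\Phi_g(\phi)=g^T\phi$ is linear, differentiating once in the direction $h\in\Ecal(\Ccal)$ brings down a factor $i\sum_j g_j h_j$, and iterating gives $\left\langle (\Wcal(g))^{(n)}(\phi),h^{\otimes n}\right\rangle=\bigl(i\sum_j g_j h_j\bigr)^n\Wcal(g)(\phi)$; equivalently, the $n$-th derivative tensor has components $(i)^n g_{i_1}\cdots g_{i_n}\Wcal(g)$. Next I would substitute this into the Moyal--Weyl product. Writing the product as the series $\Wcal(g)\star\Wcal(\tilde g)=\sum_{n\ge 0}\tfrac{1}{n!}\bigl(\tfrac{i\hbar}{2}\bigr)^n E_{i_1 j_1}\cdots E_{i_n j_n}\,\tfrac{\delta^n\Wcal(g)}{\delta\phi_{i_1}\cdots\delta\phi_{i_n}}\cdot\tfrac{\delta^n\Wcal(\tilde g)}{\delta\phi_{j_1}\cdots\delta\phi_{j_n}}$, the two derivative tensors factor out of each term and contract the $E$'s against $g$ and $\tilde g$, producing $(i)^{2n}(g^T E\tilde g)^n=(-1)^n(g^T E\tilde g)^n$. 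The surviving pointwise product is $m(\Wcal(g)\otimes\Wcal(\tilde g))=\Wcal(g+\tilde g)$, which follows from the additivity $\Phi_g+\Phi_{\tilde g}=\Phi_{g+\tilde g}$ recorded after the definition of linear observables. The series therefore reduces to
\be
\Wcal(g)\star\Wcal(\tilde g)=\sum_{n\ge 0}\frac{1}{n!}\Bigl(-\tfrac{i\hbar}{2}\,g^T E\tilde g\Bigr)^n\,\Wcal(g+\tilde g)=\exp\Bigl(-\tfrac{i\hbar}{2}\,g^T E\tilde g\Bigr)\Wcal(g+\tilde g),
\ee
and on recognizing $g^T E\tilde g=\sigma(g,\tilde g)$ this is exactly the asserted Weyl relation. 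The $*$-identity is then immediate: for real $g$ the functional $\Phi_g$ is real-valued on real configurations, so $\Wcal(g)^*=\overline{e^{i\Phi_g}}=e^{-i\Phi_g}=e^{i\Phi_{-g}}=\Wcal(-g)$.

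The only genuine subtlety --- and the point I would be most careful about --- is that $\Wcal(g)$ is \emph{not} polynomial, so the exponential operator series in $D_E$ defining $\star$ is a priori only a formal expression in this section, where $\hbar$ is treated as a number rather than a formal parameter. This turns out to be harmless precisely because the derivatives of an exponential reproduce the exponential: the operator series collapses to the scalar series $\sum_n \tfrac{1}{n!}\bigl(-\tfrac{i\hbar}{2}\,g^T E\tilde g\bigr)^n$, which converges absolutely for every value of the scalar $g^T E\tilde g$. Hence no convergence obstruction actually arises, and the interchange of the summation with the pointwise multiplication $m$ is justified term by term, so the formal manipulation above is in fact a bona fide identity of functions on $\Ecal(\Ccal)$.
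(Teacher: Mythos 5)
Your proposal is correct and follows essentially the same route as the paper's own proof: compute the $n$-th functional derivatives of the exponentiated linear observables, contract them through the Moyal--Weyl series so that the operator expansion collapses to a scalar power series in $g^T E\tilde g$, and sum it to the exponential prefactor multiplying $\Wcal(g+\tilde g)$. Your additional remarks --- the explicit verification of $\Wcal(g)^*=\Wcal(-g)$ and the observation that convergence is unproblematic because the series reduces to an absolutely convergent scalar exponential even though $\Wcal(g)\notin\Fcal_{\pol}(\Ccal)$ --- are sound and slightly more careful than the paper, which leaves both points implicit.
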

We may now introduce the Weyl $C^*$-algebra for the free scalar field on a causal set $\Ccal$. For details see for example \cite[section 8.3.5]{DerGer13} or \cite[section 14.2]{Moretti}. Consider the non-separable Hilbert space $\Hcal=L^2(\RR^N,d\mu)$ of square integrable functions with the counting measure $\mu$ on $\RR^N$. This can also be identified with $l^2(\RR^N)$, the space of square-summable sequences indexed over $\RR^N$, because any element of $\Hcal$ may be written 
\be
\sum_{g\in\RR^N} c_g e_g\,,\quad\textrm{with}\quad \sum_{g\in\RR^N} |c_g|^2<\infty\,,
\ee
where $\{e_g\}_{g\in \RR^N}$ is the orthonormal basis for $\Hcal$ given by $(e_g)_h=\delta_{gh}$.  
The representation of Weyl generators is given by
\be
(\pi(\Wcal(h))a)_g \doteq e^{-\frac{i\hbar}{2} \sigma(g,h)}a_{g+h}\,. 
\ee
for any $a\in l^2(\RR^N)$. One may easily check that the Weyl relations are fulfilled, by explicit computation. Using this representation, we define a $C^*$-norm $\|.\|$ on the generators (by taking the corresponding operator norm as operators on $\Hcal$). We are now ready to define the  Weyl $C^*$-algebra.
\begin{df}
	The Weyl $C^*$-algebra is generated by the operators $ \{\Wcal(g)\}_{g\in\RR^N} $ and completed with respect to the $C^*$-norm $\|.\|$ 
\end{df}

\subsection{States and the GNS representation}
Within the framework of algebraic quantum theory, a physical system is described by the algebra of observables associated with it. The abstract algebra may be linked to the standard formulation of quantum theory by means of a Hilbert space representation. If we start with a $C^*$-algebra, we can represent it by bounded operators. For a general topological unital $*$-algebra, we have to work with unbounded operators as well.

Choosing a Hilbert space representation is equivalent to choosing an algebraic state, by virtue of the Gelfand-Naimark-Segal (GNS) construction.

\begin{df}
	Let $ \mathfrak{A} $ be a  topological unital $*$-algebra, then an algebraic state is a linear functional $ \omega:\mathfrak{A}\to\CC $ such that:
	\begin{equation}\label{key}
	\omega(a^*a)\geq0 \quad \forall ~ a\in\fA, \qquad \omega(\1)=1.
	\end{equation}
\end{df}

\begin{thm}[GNS]
	Let $\omega$ be a state on a unital $*$-algebra $\fA$. Then there exists a representation $\pi$ of the algebra by linear operators on a dense subspace $\Kcal$ of some Hilbert space $\HH$ and a unit vector $\Omega \in \Kcal$, such that
	\be
	\omega(A) = (\Omega, \pi(A)\Omega)\, ,
	\ee
	and $\Kcal = \{\pi(A)\Omega, A \in \fA\}$.
\end{thm}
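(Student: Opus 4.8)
The plan is to carry out the standard Gelfand--Naimark--Segal construction, building the Hilbert space directly from the pair $(\fA,\omega)$. First I would use the state to define a sesquilinear form on $\fA$ by $\langle A, B\rangle \doteq \omega(A^*B)$; sesquilinearity is immediate from linearity of $\omega$ together with the properties of the $*$-operation, while the positivity condition $\omega(A^*A)\ge 0$ shows this form is positive \emph{semi}-definite. Since it need not be definite, the next step is to isolate its null space, the \emph{Gelfand ideal} $\Ncal \doteq \{A\in\fA : \omega(A^*A)=0\}$.

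The key structural fact, and in my view the technical heart of the argument, is that $\Ncal$ is a \emph{left} ideal of $\fA$. This follows from the Cauchy--Schwarz inequality $|\omega(A^*B)|^2 \le \omega(A^*A)\,\omega(B^*B)$ valid for any positive semi-definite form: given $A\in\Ncal$ and arbitrary $C\in\fA$, taking $B = C^*CA$ gives $\omega\big((CA)^*(CA)\big) = \omega(A^*B)$, and the right-hand side of the inequality contains the factor $\omega(A^*A)=0$, so $CA\in\Ncal$. Once $\Ncal$ is known to be a left ideal, the quotient $\Kcal \doteq \fA/\Ncal$ inherits a genuine (positive definite) inner product descending from $\langle\cdot,\cdot\rangle$; writing $[A]$ for the class of $A$, I would take $\HH$ to be the completion of the pre-Hilbert space $\Kcal$, so that $\Kcal$ sits densely inside $\HH$ as required.

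It remains to produce the representation and cyclic vector and to check the three claimed properties. I would define $\pi(A)$ by left multiplication, $\pi(A)[B] \doteq [AB]$, which is well defined on $\Kcal$ precisely because $\Ncal$ is a left ideal (if $B\in\Ncal$ then $AB\in\Ncal$). Linearity and multiplicativity are routine, and $\pi$ is $*$-preserving because $\langle [C], \pi(A)[B]\rangle = \omega(C^*AB) = \omega((A^*C)^*B) = \langle \pi(A^*)[C],[B]\rangle$, so $\pi(A^*)\subset\pi(A)^*$ on the domain $\Kcal$. Taking $\Omega \doteq [\1]$, which is a unit vector since $(\Omega,\Omega) = \omega(\1^*\1) = \omega(\1)=1$, the reconstruction formula is immediate, $(\Omega,\pi(A)\Omega) = \langle[\1],[A]\rangle = \omega(A)$, and cyclicity $\Kcal = \{\pi(A)\Omega : A\in\fA\}$ holds automatically since $\pi(A)\Omega = [A]$ exhausts $\Kcal$ by construction.

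A word on where the difficulty really lies: for a genuine $C^*$-algebra each $\pi(A)$ is automatically bounded and extends to all of $\HH$, but here $\fA$ is only assumed to be a topological unital $*$-algebra, so the operators $\pi(A)$ must in general be treated as possibly unbounded operators sharing the common invariant dense domain $\Kcal$. I therefore expect the main obstacle to be purely the well-definedness of left multiplication on the quotient, i.e.\ the left-ideal property of $\Ncal$ established above; the remaining verifications are algebraic identities on the dense domain, and no boundedness or topological compatibility of $\pi$ is needed for the statement as phrased.
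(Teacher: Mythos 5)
Your proposal is correct and follows essentially the same route as the paper, which does not spell out the argument but explicitly appeals to the standard constructive GNS proof (quotient of $\fA$ by the Gelfand ideal, left multiplication as the representation, $\Omega=[\1]$) given in the cited references. Your handling of the left-ideal property via Cauchy--Schwarz and your closing remark that no boundedness of $\pi(A)$ is required for a general unital $*$-algebra are exactly the points that matter in the pAQFT setting of the paper.
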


For the details of the proof, which is constructive and builds the Hilbert space from the algebra, see for example \cite[Section 2.1.3]{Book} or~\cite[\S 2.3]{FewRej2019}. Let us now record some general facts about states in the pAQFT framework.

Firstly, we establish that evaluation at the zero vector is a state on $\fA_{H}(\Ccal)$. The corresponding result in the continuum case is known, but to the best of our knowledge there is no complete proof written down anywhere in the literature, so for completeness we provide it here as well.
\begin{df}
	Let $\Mcal=(M,g)$ be a globally hyperbolic manifold, $\Ecal\doteq \Ci(M,\RR)$, $\Dcal^{\CC}_n\doteq \Ci(M^n,\CC)$. Define the space of regular polynomials $\Fcal_{\rm pol}(\Mcal)$ as the space of functionals on $\Ecal$ of the form:
	\be\label{eq:f:reg}
	F(\ph)=F_0+\sum_{k=1}^{N} \left<\ph^{\otimes k},f_k\right>\,,
	\ee
	where $F_0\in \CC$, $\ph\in\Ecal$, $f_k\in\Dcal^{\CC}_k$ and $\left<.,.\right>$ denotes the usual pairing induced by integrating with the invariant volume form $d\mu_g$ over the whole $M^k$.
\end{df}

\begin{prop}\label{prop:ev:zero:state}
	Let $\fA_H(\Mcal)=(\Fcal_{\rm pol}(\Mcal),\star_H)$ for some choice of a Hadamard function (by that we mean a bi-distribution satisfying the continuum version of {\bf W1)-W3)}, see \cite{Book} for the precise definition) $W=\frac{i}{2}E+H$. Set $\hbar=1$. The functional given by evaluation at zero
	\be
	\omega_{0}(F)\doteq F(0)\,,\qquad F\in\fA_H(\Mcal)\,,
	\ee
	is a quasi-free Hadamard state on $\fA_H(\Mcal)$ with 2-point function $W$.
\end{prop}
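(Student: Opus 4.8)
The plan is to base everything on one explicit formula for $\omega_0$ evaluated on a $\starH$-product. First I would unwind the definition \eqref{Wick}: since $m\circ e^{\hbar D_W}$ pairs $n$ functional derivatives of the left factor with $n$ derivatives of the right factor through $n$ copies of $W$, and then multiplies pointwise, evaluation at $\ph=0$ gives, for $F,G\in\Fcal_{\pol}(\Mcal)$,
\be
\omega_{0}(F\starH G)=\sum_{n=0}^{\infty}\frac{\hbar^{n}}{n!}\,\left\langle W^{\otimes n},\,F^{(n)}(0)\otimes G^{(n)}(0)\right\rangle\,,
\ee
where the pairing contracts the first (resp.\ second) slot of each factor of $W^{\otimes n}$ against $F^{(n)}(0)\in\Dcal^{\CC}_{n}$ (resp.\ $G^{(n)}(0)$), and the sum is finite. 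Linearity of $\omega_0$ is immediate and $\omega_0(\1)=\1(0)=1$, so the content lies in reading off the moments and in establishing positivity.

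Next I would identify the state as quasi-free with two-point function $W$. Putting $F=\Phi_f$, $G=\Phi_g$, only the $n=0,1$ terms contribute; the $n=0$ term vanishes because $\Phi_f(0)=0$, leaving $\omega_0(\Phi_f\starH\Phi_g)=\hbar\langle f,Wg\rangle$, while $\omega_0(\Phi_f)=0$. The Gaussian structure is cleanest on exponential generating functionals $e^{\Phi_f}(\ph)=e^{\langle\ph,f\rangle}$: the same formula gives $e^{\Phi_f}\starH e^{\Phi_g}=e^{\hbar\langle f,Wg\rangle}e^{\Phi_{f+g}}$, and hence by associativity
\be
\omega_0\!\left(e^{\Phi_{f_1}}\starH\cdots\starH e^{\Phi_{f_n}}\right)=\exp\Big(\hbar\sum_{a<b}\langle f_a,Wf_b\rangle\Big)\,.
\ee
Expanding in the $f_a$ and matching Taylor coefficients recovers Wick's theorem for the $n$-point functions of the $\Phi_f$, all contractions being given by $W$; this is exactly the assertion that $\omega_0$ is quasi-free with two-point function $W$. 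As $W$ is assumed Hadamard (the continuum analogue of \textbf{W1)--W3)}), the Hadamard property of $\omega_0$ follows at once.

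The main obstacle is positivity, $\omega_0(F^{*}\starH F)\geq0$. Because the underlying fields are real and $*$ is complex conjugation, $F^{*(n)}(0)=\overline{F^{(n)}(0)}$, so the displayed formula yields
\be
\omega_0(F^{*}\starH F)=\sum_{n=0}^{\infty}\frac{\hbar^{n}}{n!}\,\left\langle W^{\otimes n},\,\overline{F^{(n)}(0)}\otimes F^{(n)}(0)\right\rangle\,.
\ee
Since $\hbar^{n}/n!>0$, it suffices to show each summand is nonnegative, i.e.\ that $W^{\otimes n}$ is of positive type on $\overline{a}\otimes a$ with $a=F^{(n)}(0)$. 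This is precisely where \textbf{W2)}, $W\geq0$, enters: the positive semidefinite Hermitian form defined by $W$ admits a one-particle structure, a linear map $\kappa$ into a Hilbert space $\Kcal$ with $\langle W,\overline{f}\otimes g\rangle=\langle\kappa f,\kappa g\rangle_{\Kcal}$ (in the discrete/finite-rank case simply a factorisation $W=B^{\dagger}B$). Then $\langle W^{\otimes n},\overline{a}\otimes a\rangle=\|\kappa^{\otimes n}a\|^{2}_{\Kcal^{\otimes n}}\geq0$, exhibiting $\omega_0(F^{*}\starH F)$ as a convergent sum of nonnegative terms. The one point requiring genuine care in the continuum is the implication $W\geq0\Rightarrow W^{\otimes n}\geq0$ for the product bidistribution tested against (symmetric) elements of $\Dcal^{\CC}_n$; this is the manifest positivity of the inner product on the symmetric Fock space generated by $\kappa$, and completes the proof.
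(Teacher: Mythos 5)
Your proof is correct and shares the paper's overall skeleton (reduce everything to the explicit formula for $\omega_0$ on $\starH$-products, then handle positivity order by order and verify quasifreeness separately), but both of its key steps are carried out by genuinely different means. For positivity, the paper also reduces to showing that the $n$-fold contraction kernels $w_n$ (in effect, symmetrised tensor powers of $W$) are of positive type, but it proves this by induction on $n$: it factorises $w_n$ over pairs $(i,j)$ into a $W$-contraction times a $w_{n-1}$-contraction, forms the two Gram matrices $A=[\langle \bar f_i,Wf_j\rangle]$ and $B$ on product test functions $f_1\cdots f_n$, and invokes the Schur product theorem to conclude $\langle\bar f,w_n f\rangle=\lambda_1^\dagger(A*B)\lambda_1\geq 0$. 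You instead factorise $W$ itself through a one-particle structure $\kappa:\Dcal\to\Kcal$ and obtain $\langle W^{\otimes n},\bar a\otimes a\rangle=\|\kappa^{\otimes n}a\|^2\geq 0$ in one stroke; this is non-inductive and conceptually cleaner, since it exhibits exactly the Fock-space positivity that underlies quasifree states and hands you the GNS data for free, whereas the paper's matrix argument is more elementary (no Hilbert-space completion) and transfers verbatim to the finite causal-set version stated immediately afterwards. Both routes share the same continuum caveat, which you flag and the paper leaves implicit: positivity is first established on (sums of) product test functions or algebraic tensors, and must be extended to general elements of $\Dcal^{\CC}_n$ by density of such sums in $C^\infty_c(M^n)$ together with continuity of the distributions involved. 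For quasifreeness, the paper computes the $2k$-point functions directly as a sum over pairings (directed graphs with $s(e)<t(e)$), while you derive Wick's theorem from the generating identity $\omega_0(e^{\Phi_{f_1}}\starH\cdots\starH e^{\Phi_{f_n}})=\exp\bigl(\sum_{a<b}\langle f_a,Wf_b\rangle\bigr)$; this is slicker, but note that the exponential functionals lie outside $\Fcal_{\rm pol}(\Mcal)$, so strictly you should either read this identity order by order in formal parameters multiplying the $f_a$ (each coefficient then being an identity among polynomial functionals) or enlarge the algebra to include such coherent-type functionals, on which the $\starH$-series converges --- a presentational point rather than a gap.
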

\begin{proof}
Take $F$ as in \eqref{eq:f:reg} and write (it is useful to keep $\hbar$ explicit at this stage)
\begin{align}
\omega_{0}(F^*\star_H F)&= \sum_{n=0}^{\infty}\frac{\hbar^n}{n!} \left<(F^{(n)}(0))^*,W^{\otimes n} F^{(n)}(0)\right> \notag\\
&= |F_0|^2+ \sum_{k=1}^{N}\hbar^k k! \left<\overline{f}_k,w_k f_k\right>\,,
\end{align}
where $w_k$ is a distribution in $\Dcal'_n$ defined by the following distributional kernel:
\begin{equation}
 w_k(x_1,\dots,x_k,y_1,\dots,y_k)\doteq  ((\Phi_{x_1},\dots \Phi_{x_k})^*\star_H (\Phi_{y_1},\dots \Phi_{y_k}))(0)\,,
\end{equation}
where $\Phi_{x}$ is the evaluation functional at $x\in M$, i.e. for $\ph\in\Ecal$: $\Phi_{x}(\ph)\doteq \ph(x)$.

Hence for the positivity of $\omega_0$ it is sufficient to show that all $w_k$, $k\in \NN$ are positive type, i.e.,  $w(\bar{F},F)\ge 0$. We proceed by induction. First, note that for $k=1$ we have $w_1=W$, which is by assumption positive  type. We need to prove the induction step, i.e. assuming $w_{n-1}$ is positive definite, we want to show that $w_n$ is positive type.

Our proof  is similar to the one used in \cite{DerGer13} for states on the Weyl algebra, but is more general. First we recall the Schur product theorem about positive semi-definite  matrices:
if $A\geq 0$ and $B\geq 0$ ($A$, $B$ are positive semi-definite), then their Hadamard product $A*B$ is also positive semi-definite.

\begin{widetext}
Let $\underline{x}\equiv (x_1,\dots,x_n)$, $\underline{y}\equiv (y_1,\dots,y_n)$. We express $w_n$ as
\be\label{eq:fact}
w_n(\vx,\vy)=\sum_{i,j=1}^n (\Phi_{x_i}^*\star_H \Phi_{y_j})(0) ((\Phi_{x_1}\cdot\ldots\cdot\widehat{\Phi_{x_i}}\cdot\ldots\cdot\Phi_{x_n})^*\star_H (\Phi_{y_1}\cdot\ldots\cdot\widehat{\Phi_{y_j}}\cdot\ldots\cdot\Phi_{y_n}))(0)\,,
\ee
where $\widehat{\ }$ indicates that the given symbol is omitted. Let $f=f_1\cdot\ldots\cdot f_n\in\Dcal_n$, where $f_i\in\Dcal$, $i=1,\dots,n$. 
Using \eqref{eq:fact} we obtain
\be
\left<\bar{f},w_n f\right>=\sum_{i,j=1}^{n} a_{ij} b_{ij}\,,
\ee
where 
\begin{align}
a_{ij}&\equiv\left<\bar{f}_i,W f_j\right>\,,\nonumber\\ b_{ij}&\equiv \left<(f_1 \cdot\ldots\cdot \widehat{f_i} \cdot\ldots\cdot f_n)^*, w_{n-1} (f_1 \cdot\ldots\cdot \widehat{f_j} \cdot\ldots\cdot f_n)\right>
\end{align}
\end{widetext}
Define $n\times n$ matrices $A\equiv [a_{ij}]$ and $B\equiv [b_{ij}]$. These are both positive semi-definite. To see this, we consider $\underline{\lambda}\in\CC^n$ and define
\be
\tilde{f}_{\underline{\lambda}}\doteq \sum_{i=1}^{n}\lambda_i f_i\,,\quad \hat{f}_{\underline{\lambda}}\doteq \sum_{i=1}^{n} \lambda_i f_1 \cdot\ldots\cdot \widehat{f_i} \cdot\ldots\cdot f_n\,.
\ee
It follows that
\be
\underline{\lambda}^\dagger A \underline{\lambda} = \left<\overline{\tilde{f}_{\underline{\lambda}}},W \tilde{f}_{\underline{\lambda}}\right>\geq 0\,,
\ee
since $W$ is positive semi-definite, and
\be
\underline{\lambda}^\dagger B \underline{\lambda} = \left<\overline{\hat{f}_{\underline{\lambda}}},w_{n-1} \hat{f}_{\underline{\lambda}}\right>\geq 0\,,
\ee
using the assumption in the induction step. By Schur product theorem the Hadamard product $A*B$ is also positive semi-definite and we note that 
\be
\left<\bar{f},w_n f\right>=\lambda_1^\dagger( A*B) \lambda_1\,,
\ee
where $\lambda_1=(1,\dots,1)$, so we conclude that
\be
\left<\bar{f},w_n f\right>\geq 0\,,
\ee
which proves the induction step.

It remains to show that $\omega_{0}$ is a quasifree state. This, however, is straightforward, since for $f_1,\dots,f_{2k}\in\Dcal$ the correlation function is given by 
\be
\omega_0(\Phi_{f_1}\star_H\cdots\star_H\Phi_{f_{2k}})=\sum_{G\in\mathcal{G}_{2k}} \prod_{e\in G}
W(f_{s(e)},f_{t(e)}),
\ee
where $f_i\in\Dcal$, $i=1,\dots,2k$, $\mathcal{G}_{2k}$ is the set of directed graphs with vertices labelled $1,\ldots,2n$, such that each vertex is met by exactly one edge and the source and target of each edge obey $s(e)<t(e)$. 

Correlation functions of an odd number of fields vanish, since all uncontracted factors of $\ph$ give zero after the evaluation.
\end{proof}

\begin{rem}
	We could replace $\Fcal_{\rm pol}(\Mcal)$ with a larger space of functionals, e.g. the \textit{microcausal functionals}~\cite{BF0}. The proof is then exactly the same, but the test function spaces $\Dcal_n$ are replaced by appropriate spaces of distributions satisfying given wavefront set conditions.
\end{rem}
Let us now state the discrete version of the Proposition~\ref{prop:ev:zero:state}. Again, we set $\hbar=1$.
\begin{prop}
	The functional given by evaluation at zero
	\be
	\omega_{0}(F)\doteq F(0)\,,\qquad F\in\fA_H(\Ccal)\,,
	\ee
	is a quasi-free state on $\fA_H(\Ccal)$, with  2-point function $W=\frac{i}{2}E +H$.
\end{prop}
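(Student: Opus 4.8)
The plan is to follow the proof of Proposition~\ref{prop:ev:zero:state} essentially verbatim, exploiting the fact that in the discrete setting every function space is finite-dimensional: the test-function spaces $\Dcal_n$ are replaced by $(\CC^N)^{\otimes n}$ and the Hadamard bi-distribution becomes the hermitian matrix $W=\frac{i}{2}E+H$ of \eqref{eq:W:decomp} satisfying {\bf W1)--W3)}. All wavefront-set and distributional-product subtleties disappear, so the entire argument reduces to linear algebra.

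First I would check normalisation, which is immediate since $\omega_0(\1)=\1(0)=1$. For positivity, I would write an arbitrary $F\in\Fcal_{\pol}(\Ccal)$ as $F(\phi)=F_0+\sum_{k=1}^{M}\langle \phi^{\otimes k},f_k\rangle$ with $F_0\in\CC$ and $f_k\in(\CC^N)^{\otimes k}$, and compute $\omega_0(F^*\star_H F)$ by expanding the exponential in the definition \eqref{Wick} of $\star_H$ and evaluating at $\phi=0$. Since evaluation at zero annihilates every term carrying an uncontracted factor of $\phi$, only the fully contracted terms survive, yielding $\omega_0(F^*\star_H F)=|F_0|^2+\sum_{k=1}^{M} k!\,\langle \bar f_k, w_k f_k\rangle$, where $w_k$ is the matrix with entries $((\Phi_{x_1}\cdots\Phi_{x_k})^*\star_H(\Phi_{y_1}\cdots\Phi_{y_k}))(0)$ and $\Phi_x$ is evaluation at $x\in\Ccal$.

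The heart of the proof is to show that each $w_k$ is positive semi-definite, which I would establish by induction on $k$ exactly as in the continuum. The base case $w_1=W\geq 0$ is precisely {\bf W2)}. For the induction step I would use the factorisation \eqref{eq:fact}, which writes $\langle\bar f, w_n f\rangle=\sum_{i,j}a_{ij}b_{ij}$ as the entrywise (Hadamard) product of matrices $A=[a_{ij}]$, built from $W$, and $B=[b_{ij}]$, built from $w_{n-1}$. Both are positive semi-definite---the former by {\bf W2)} and the latter by the inductive hypothesis---so the Schur product theorem gives $A*B\geq 0$ and hence $\langle \bar f, w_n f\rangle\geq 0$. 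Quasi-freeness then follows from the same directed-graph (Wick/Isserlis) expansion of the $2k$-point functions as in the continuum, with all odd correlation functions vanishing because any uncontracted $\Phi$ evaluates to zero at $\phi=0$.

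Because the discrete case is a strict specialisation of the continuum one, I do not anticipate any genuine obstacle; the only step requiring real care is the inductive positivity argument, and even there the Schur product theorem is applied to honest finite matrices rather than operator-valued kernels, so no regularisation or continuity argument is needed.
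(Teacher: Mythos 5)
Your proposal is correct and takes essentially the same route as the paper: the paper proves the continuum Proposition~\ref{prop:ev:zero:state} in detail precisely so that the discrete case follows by the identical argument specialised to finite dimensions, which is exactly what you do (normalisation, the survival of only fully contracted terms at $\phi=0$, the inductive positivity of the $w_k$ via the Schur product theorem with base case {\bf W2)}, and quasi-freeness from the graph expansion with odd correlators vanishing). Your observation that all distributional and wavefront-set subtleties disappear, leaving pure linear algebra, is also the paper's implicit justification for stating the discrete version without a separate proof.
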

\begin{cor}
	The functional given by
	\be
	\omega_{H,0}(F)\doteq  \al_{H}(F)(0)
	\ee
	is a state on $\fA(\Ccal)$ and if we take $W=\frac{i}{2}E +H$ to be that of the SJ state, then 
	\be
	\omega_{H,0}=\omega_{\rm SJ}\,.
	\ee
\end{cor}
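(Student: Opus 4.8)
The plan is to deduce both assertions from the immediately preceding Proposition, which establishes that evaluation at zero, $\om_0(F)\doteq F(0)$, is a quasi-free state on $\fA_H(\Ccal)=(\Fcal_{\pol}(\Ccal),\starH)$ with two-point function $W=\tfrac{i}{2}E+H$. The decisive observation is that $\om_{H,0}=\om_0\circ\al_H$, so the corollary is essentially a statement about transporting the known state $\om_0$ back along the gauge transformation $\al_H$. First I would record that $\al_H\colon\fA(\Ccal)\to\fA_H(\Ccal)$ is a unital $*$-isomorphism: it intertwines the two products by the identity $F\starH G=\al_H(\al_H^{-1}F\star\al_H^{-1}G)$ stated earlier, it is unital since $\Dcal_H\1=0$, and because $H$ is real (property {\bf W1}) the operator $\Dcal_H$, and hence $\al_H=e^{\frac{\hbar}{2}\Dcal_H}$, commutes with complex conjugation, so $\al_H(F^*)=(\al_H F)^*$. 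Positivity and normalisation of $\om_{H,0}$ then follow at once:
\be
\om_{H,0}(F^*\star F)=\om_0\big(\al_H(F^*\star F)\big)=\om_0\big((\al_H F)^*\starH(\al_H F)\big)\ge 0\,,
\ee
using that $\om_0$ is a state on $\fA_H(\Ccal)$, together with $\om_{H,0}(\1)=\al_H(\1)(0)=1$. This settles the first assertion.

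For the identification with the SJ state, the key computational point is that $\al_H$ acts as the identity on linear functionals, since $\Dcal_H\Phi_f=H^{ij}(\Phi_f)_{,ij}=0$. Consequently, for any linear observables $\Phi_{f_1},\dots,\Phi_{f_n}$ the $\star$-ordered correlation functions of $\om_{H,0}$ coincide with the $\starH$-ordered correlation functions of $\om_0$:
\be
\om_{H,0}(\Phi_{f_1}\star\cdots\star\Phi_{f_n})=\om_0(\al_H\Phi_{f_1}\starH\cdots\starH\al_H\Phi_{f_n})=\om_0(\Phi_{f_1}\starH\cdots\starH\Phi_{f_n})\,.
\ee
By the preceding Proposition the right-hand side is given by the quasi-free (Wick) expansion in the two-point function $W$; in particular the two-point function of $\om_{H,0}$ with respect to $\star$ equals $W=\tfrac{i}{2}E+H$, and $\om_{H,0}$ inherits the quasi-free structure. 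Taking $H=\tfrac12\sqrt{-E^2}$, so that $W=\tfrac12(iE+\sqrt{-E^2})$ is the SJ two-point function of \cite{Sorkin}, I would note that $\om_{\rm SJ}$ is by construction the quasi-free state with exactly this two-point function, and its $\star$-ordered correlators are the same Wick sum. Since the $\star$-monomials of linear observables span $\Fcal_{\pol}(\Ccal)$, two quasi-free states agreeing on all such monomials agree on the whole algebra, whence $\om_{H,0}=\om_{\rm SJ}$.

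The individual steps are routine; the only points needing genuine care are bookkeeping ones, namely keeping straight which product each correlation function is ordered with respect to, and verifying the two structural facts about $\al_H$ — that it is a $*$-isomorphism (which transports positivity) and that it is trivial on linear observables (which preserves the two-point function and the Gaussian form). The substantive analytic input, positivity, has already been discharged by the preceding Proposition, so there is no real obstacle here beyond confirming that the $\star$-ordered two-point function appearing in the SJ axioms {\bf SJ1)--3)} is indeed the one reproduced by $\om_{H,0}$, which the computation above makes manifest.
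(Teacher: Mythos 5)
Your proposal is correct and takes essentially the same route as the paper: the paper's own proof consists solely of the two-point function computation $\omega_{H,0}(\Phi_f\star\Phi_g)=(\Phi_f\star_H\Phi_g)(0)=f^TWg$, leaving the remaining ingredients (the preceding Proposition and the fact that a quasi-free state is determined by its two-point function) implicit. Your extra verifications --- that $\alpha_H$ is a unital $*$-isomorphism transporting positivity from $\omega_0$, that $\alpha_H$ acts trivially on linear observables, and that agreement of two-point functions forces equality of the quasi-free states --- simply make explicit what the paper leaves tacit.
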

\begin{proof}
	The 2-point function of $\omega_{H,0}$ is given by
	\be
	\omega_{H,0}(\Phi_f \star \Phi_g) =(\Phi_f\star_H \Phi_g)(0)= W=(\tfrac{i}{2}E+H)\,.
	\ee
\end{proof}
In particular, for Weyl generators, we obtain
\be
\omega_{H,0}(\Wcal(g))=e^{-\frac{\hbar}{2}\langle g,Hg\rangle}\,,
\ee
so $H$ is the \textit{covariance} of the state $\omega_{H,0}$.

As stated at the beginning of this section, passing
between $\fA$ and $\fA_{H}$ can be understood as normal ordering.  Hence, on one hand we can work with normal-ordered quantities $\alpha_{H}^{-1}(F)\equiv \no{F}\nolimits_{H}$, $\no{G}\nolimits_{H}$ within the algebra $\fA$ or with original functionals $F,G$ within the algebra $\fA_H$. Correlation functions are then computed using the rule:
\be
\omega_{H,0}(\no{F}\nolimits_H\star \no{G}\nolimits_H) =\omega_0(F\star_H G)=(F\star_H G)(0) \,.
\ee

Let us now discuss the generalization to the situation, where $\Fcal_{\pol}(\Ccal)$ is replaced with the space $\Fcal(\Ccal)[[\hbar]]$ of formal power series. 

We need the notion of states on the formal power series algebra $\fA^\hbar=(\Fcal[[\hbar]],\star)$. Condition \eqref{key} has to be understood in the sense of the formal power series. For $A=\sum_{n=0}^{\infty} \hbar^n A_n$ and $\omega=\sum_{n=0}^{\infty} \hbar^n\omega_n$, 
the normalization condition is that $\omega(\1)=\omega_0(\1)=1$.
We have
\begin{equation}
\omega(A^*A)=\omega_0(A_0^*A_0)+\hbar(\omega_0(A_1^*A_0+A_0^*A_1)+\omega_1(A_0^*A_0))+\dots\,,
\end{equation}
and, by definition, positivity for a formal power series means that the lowest non-vanishing term has to be positive (see \cite{BW98}), so if $\omega_0(A_0^*A_0)\neq 0$, then we require $\omega_0(A_0^*A_0)\geq 0$, i.e. $\omega_{0}$ is a state in the usual sense. Alternatively, one could use the stronger notion of positivity, proposed in \cite{DFqed}, where one says that a formal power series $b=\sum_{n=0}^{\infty} \hbar b_n$ is non-negative, if there exists a series $c=\sum_{n=0}^{\infty} \hbar c_n$, such that $b=c^*c$. This does not make any difference for what follows.

\subsection{Interacting theory}\label{sec:inter}
\subsubsection{Motivating the approach}\label{sec:motivation}
We finish this section with the construction of the interacting theory for a given interaction $V\in \Fcal(\Ccal)$. We use the framework of perturbative AQFT \cite{BDF,Vienna,Book}, where the interacting fields are constructed with the use of quantum M{\o}ller operators. The motivation comes from the interaction picture in quantum mechanics. Consider the continuum theory of the scalar field on Minkowski spacetime with the free Hamiltonain $H_0$ and let the interaction Hamiltonian begiven by $-\int_\Sigma  \normOrd{\Lcal_I(0,\mathbf x)}d\sigma$, where the integration goes over some fixed Cauchy surface and $\normOrd{\Lcal_I}$ is the normal-ordered Lagrangian density.

Heuristically, we would like to be able to use Dyson's formula for interacting fields, which reads:
\begin{equation}
\ph_I(x)=U(x^0,s)^{-1}\ph(x)U(x^0,s) =U(t,s)^{-1}U(t,x^0)\ph(x)U(x^0,s)\,,
\end{equation}
where $\ph(x)$ is the free field, $\ph_I(x)$ is the interacting field and the interacting time evolution operator is given by: 
\begin{widetext}
\begin{equation}
U_I(t,s)=1+\sum_{n=1}^\infty \frac{i^n\la^n}{n!} 
\int_ {([s,t]\times\mathbb R^3)^n}\!\!\!\! T(\normOrd{\Lcal_I(x_1)}\dots \normOrd{\Lcal_I(x_n)})  d^{4}x_1\dots d^4x_n\,,
\end{equation}
\end{widetext}
where $\la$ is the coupling constant,  $T$ denotes time-ordering and $\normOrd{\Lcal_I}$ is an operator-valued function given by
\[\Lcal_I(x)=e^{iH_0x^0}\normOrd{\Lcal_I(0,\mathbf x)}e^{-iH_0x^0}\,,
\]
Unfortunately, there are many problems with the above idea, already in Minkowski spacetime:
\begin{itemize}
	\item Typical Lagrangian densities, e.g. $\normOrd{\Lcal_I(x)}=\normOrd{\ph(x)^4}$ cannot be restricted to a Cauchy surface as operator-valued distributions. This is the source of the UV problem. On causal sets, the major problem with such quantities is the lack of a good analog of a Cauchy surface.
	\item There is a problem with taking the adiabatic limit, as the integral of the Lagrangian density over $\mathbf x$ does not exist if $\Sigma$ is non-compact. 
	\item The overall sum might not converge.
\end{itemize}
In pAQFT, the first two problems are addressed as follows: quantities like $\Lcal_I(0,{\bf x})$ are replaced by smeared ones, of the form: $V\equiv\int_{\M} f(x)\Lcal_I(x) d^4x$, where $f\in\Dcal(\M)$ plays the role of the adiabatic cutoff. Note that the expression we use is now fully covariant, so there is no need for singling out a Cauchy surface. The normal ordering is achieved by fixing a Hadamard function $W=\frac{i}{2}E +H$ and using the corresponding $\star_H$ product in the free theory. Finally, the time-ordered products corresponding to the above choice of a Hadamard function have to be constructed. In continuum, this is achieved through the Epstein-Glaser renormalization \cite{EG}, but on causal sets we are able to use a more direct method, as shown below. The problem with overall convergence cannot be addressed with our methods, so we will work with formal power series in $\hbar$. As an intermediate step, we will also use formal power series in the coupling constant $\lambda$. 

In pAQFT time-ordered products are not just formal expressions, but they stem from a binary product $\T$, so that:
\begin{equation}
T\left(\int_{\M} f_1(x)\Lcal_I(x) d^4x\dots \int_{\M} f_n(x)\Lcal_I(x) d^4x\right)
:= \int_{\M} f_1(x)\Lcal_I(x) d^4x\T\dots\T \int_{\M} f_n(x)\Lcal_I(x) d^4x\,.
\end{equation}
Constructing $\T$ will be one of the main tasks in the following section.
\subsubsection{$S$-matrix and interacting fields}

We start with the algebra $\fA(\Ccal)$, constructed in section \ref{sec:formal power series}, but we introduce a new formal parameter ${\lambda}$, which plays the role of the coupling constant. In this section $\fA^{\hbar,\lambda}(\Ccal)\equiv(\Fcal(\Ccal)[[\hbar,{\lambda}]],\star)$.  We fix a Hadamard function $W=\frac{i}{2}E +H$ and denote
 $\fA^{\hbar,\lambda}_H(\Ccal)\equiv(\Fcal(\Ccal)[[\hbar,{\lambda}]],\star_H)$. On this algebra there is a distinguished state given by evaluation at 0, i.e.
\be
\omega_0(F)\doteq F(0)\,.
\ee
 Different choices of $W$ lead to different, but \textit{isomorphic} algebras (all of them isomorphic to $\fA^{\hbar,\lambda}(\Ccal)$), each one with its own distinguished state, given by evaluation at 0. In each case, the 2-point function of this state is by definition $W$. 
 
 Since $\fA^{\hbar,\lambda}(\Ccal)$ and  $\fA^{\hbar,\lambda}_H(\Ccal)$ are related through the isomorphism $\alpha_{H}$, we obtain a family of states on   $\fA^{\hbar,\lambda}(\Ccal)$  labeled by $H$, i.e.:
\[
\omega_{H,0}(F)=\alpha_{H}(F)(0)\,.
\]
The 2-point functions of these states are given by:
\[
\alpha_{H}(\Phi_f\star\Phi_g)(0)=(\Phi_f\star_H\Phi_g)(0)=f^TWg\,.
\] 
If $W$ satisfies {\bf SJ1)-3)},  then $\omega_{H,0}$ is the SJ state on $\fA^{\hbar,\lambda}(\Ccal)$.

Now we are ready to introduce time-ordered products. To this end, we will need the Feynman propagator. In our framework, there is a ``Feynman-like'' propagator for every choice of $W$, so it is a state-dependent notion. We define it as
\be
\Delta^{\rm F}=\frac{i}{2}(E^++E^-)+H\,,
\ee
where $H$ is the symmetric part of the 2-point function $W$. In what follows, we will keep $H$ fixed and refer to $\Delta^{\rm F}$ simply as \textit{the Feynman propagator}.

Given $\Delta^{\rm F}$, we define the time-ordered product $\T$ by 
\begin{equation}\label{eq:timeord}
F\T G \doteq m\circ e^{\frac{\hbar}{2} D_{\Delta^{\rm F}}}(F\otimes G)=\sum_{n=0}^{\infty} \frac{\hbar^n}{n!} F_{,i_1\dots i_n} (\Delta^{\rm F})^{i_1j_1}\dots (\Delta^{\rm F})^{i_nj_n} G_{,j_1\dots j_n}  \,.
\end{equation}
We can also write it as
\[
F\T G=\TT (\TT^{-1} F\cdot  \TT^{-1} G)\,,
\]
where the product on the right-hand side is the usual point-wise product of functionals in $\Fcal(\Ccal)[[\hbar,\lambda]]$ and the operator $\TT$ is given by
\[
 \TT\doteq e^{\frac{\hbar}{2}\Dcal_{\Delta^{\rm F}}}\,.
\]
\begin{rem}
	On Minkowski spacetime,	the operator $\TT$ formally corresponds to a path integral involving a `Gaussian measure' with covariance $i\hbar\Delta^\mathrm{F}$, i.e.
	\be\label{path integral}
	\TT F(\ph)\stackrel{\mathrm{formal}}{=} \int F(\ph-\phi)  d\mu_{i\hbar\Delta^{\mathrm{F}}}(\phi)\ . 
	\ee
	Therefore, one can think of the pAQFT formalism as a way to make path integrals and other formulas used in perturbative QFT rigorous. We hope that this statement can be made more precise in the context of causal sets, where the path integral has better chances of being well-defined.
\end{rem}

Using the time-ordered product, we introduce \textit{the formal S-matrix} for the interaction $V$ and coupling constant ${\lambda}$. It is given by
\be
\Scal({\lambda} V)\doteq e_{\TT}^{\frac{i}{\hbar} {\lambda} V}=\sum_{n=0}^{\infty}  \frac{{\lambda}^ni^n}{\hbar^n n!} \underbrace{V\T\dots \T V}_n\,.
\ee
Next, we define the interacting fields.
For a classical observable $F\in \Fcal(\Ccal)$, the corresponding quantum interacting field is given by $R_{{\lambda} V}(F)$, where $R_{{\lambda} V}$ is the retarded quantum M{\o}ller operator defined by 
\begin{align}
R_{{\lambda} V}(F) &\doteq -i\hbar \frac{d}{d\mu} \Scal({\lambda} V)^{-1}\star_H \Scal({\lambda} V+\mu F)\big|_{\mu=0} \notag\\
&= \Big( e_{\TT}^{\frac{i}{\hbar} {\lambda} V}\Big)^{-1}\star_H \Big( e_{\TT}^{\frac{i}{\hbar} {\lambda} V}\T F\Big)\,.
\end{align}
Note the analogy of this formula to the Dyson series mentioned in section~\ref{sec:motivation} as the motivation for the pAQFT approach.

We can also use the M{\o}ller operator to deform the free star product and obtain the interacting one, using the formula:
\be\label{starint}
F  \starHint G\doteq R^{-1}_{{\lambda} V}(R_{{\lambda} V}(F)\star_H R_{{\lambda} V}(G))\,.
\ee
This way we obtain the interacting algebra $\fA_H^{\rm int}(\Ccal)\doteq (\Fcal(\Ccal)[[\hbar,{\lambda}]],\starHint)$. Given a state $\omega$ on the free algebra, we can construct the state $\omega_{\rm int}$ on  $\fA_H^{\rm int}(\Ccal)$ using the pullback:
\be
\omega_{\rm int}(F)\doteq \omega\circ R_{{\lambda} V}(F)\,,
\ee
where $F\in \Fcal(\Ccal)[[\hbar,{\lambda}]]$. The natural choice of a state in this context is $\omega_0$ for the free theory. Next, we want to choose observables. The first natural candidate is the smeared interacting field itself, i.e. 
\[
\Phi^{\rm int}_f\doteq R_{{\lambda} V}(\Phi_{f})
\]
The $n$-point correlation function of smeared interacting fields is given by:
\begin{align}
\omega_n^{\rm int}(g_1,\dots,g_n)&\doteq
\omega_{0}(\Phi^{\rm int}_{g_1}\star_H\dots\star_H \Phi^{\rm int}_{g_n}) \notag \\
&=(R_{{\lambda} V}(\Phi_{g_1})\star_H\dots\star_H R_{{\lambda} V}(\Phi_{g_1}))(0)\,.
\end{align}
Note that the product we used is the product of the free theory, since, following the philosophy of the interaction picture, interacting fields are constructed within the free field algebra.

Alternatively, we can write the above correlation function as
\begin{align}
\omega_n^{\rm int}(g_1,\dots,g_n)&=\omega_{0}\circ R_{{\lambda} V}\circ R_{{\lambda} V}^{-1} (R_{{\lambda} V}(\Phi_{g_1})\star_H\dots\star_H R_{{\lambda} V}(\Phi_{g_n})) \nonumber\\
&=\omega_{\rm int}(\Phi_{g_1}\starHint\dots \starHint \Phi_{g_n})\,.
\end{align}
Here we did not modify the observables, but changed the product and changed the state. Other natural observables to consider include all the local polynomials \eqref{Local:pol}. 
\begin{rem}
	In the pAQFT setting there are two equivalent ways of treating the interacting theory. On the one hand, one can work with the algebra $\fA^{\hbar,\lambda}_H(\Ccal)=(\Fcal(\Ccal)[[\hbar,\lambda]],\star_H)$ and identify physical observables with elements of this algebra by means of $R_{{\lambda} V}$. For example, take $\Phi_f$, as above. Inside $\fA^{\hbar,\lambda}_H(\Ccal)$ the free quantum observable corresponding to this object is just $\Phi_f$, while the interacting observable is identified as $\Phi_f^{\rm int}= R_{{\lambda} V}(\Phi_f)$. For computing the correlation functions we use the product $\star_H$ and the state $\omega_0$ (given by evaluation at $\phi=0$).
	
	On the other hand, we can model interacting fields using $\fA_H^{\rm int}(\Ccal)= (\Fcal(\Ccal)[[\hbar,{\lambda}]],\starHint)$. In this case, the interacting observable corresponding to $\Phi_f$ is just $\Phi_f$, but for computing the correlation functions we use the product $\star_{H,\rm int}$ and the state $\omega_{\rm int}$.
	
	Note that in one approach we work with complicated observables, but a simple product and a simple state, while in the other approach we have simple observables, but the product and the state become complicated. The crucial difference between the two approaches is how we identify physical objects (e.g. the linear field) with elements of $\Fcal[[\hbar,{\lambda}]]$, which is the underlying vector space in both algebras $\fA^{\hbar,\lambda}_H(\Ccal)$ and $\fA_H^{\rm int}(\Ccal)$.
\end{rem}
\begin{rem}
	Note that in  QFT on causal sets there is a natural UV regularization due to the existence of fundamental length scale. Since the theory is defined on discrete sets, none of the problems that appear in continuum, due to singularities of the Feynman propagator, occur here. Hence there is no need for renormalization. However, one has to be careful when taking the continuum limit, since the UV divergences could again occur, if not taken care of properly. We hope to address this issue in our future work.
\end{rem}

Alternative formulas for $R_{{\lambda} V}$ and $\star_{\rm int}$, in terms of Feynman-like diagrams, have been derived in \cite{HR}. Even though \cite{HR} is formulated for the continuum case, the results are algebraic in nature, so apply also to causal sets (see the Appendix for explicit formulas and more detail).
 In the same work, it has also been shown that
\be
R_{{\lambda} V}(F)\Big|_{\hbar=0}= r_{{\lambda} V}(F)\,,
\ee
and
\be
\frac{1}{i\hbar}[F,G]_{\star_{H,\rm int}}\Big|_{\hbar=0}=\{F,G\}_{{\lambda} V}\,.
\ee
With these formulas at hand, one can now implement any interacting theory in numerical simulations, provided the free theory is known. This opens up perspectives for more examples of interesting causal set field theories, where the influence of adding different interaction terms can be tested and compared with the continuum.

\section{Conclusions and Outlook}
In this paper we have shown how to construct a large class of QFT models on causal sets, using methods of perturbative algebraic quantum field theory (pAQFT). For the purpose of defining the free classical theory (the starting point of our construction) we discussed a number of discretized d'Alembert 
operators and their retarded Green functions. We have also proposed a new ansatz for a class of such discretized wave operators, which uses the notion of \textit{preferred past structure}. The latter is an additional structure augmenting those of a causal set. However, we hope that in our future research we will understand better how to obtain this structure more intrinsically. In particular, we want to determine, using numerical simulations, how much choice there is in typical sprinklings in the definition of a preferred past structure. The element of choice can be removed altogether by defining a discrete d'Alembertian that is the average of $P_\Lambda$ over all possible preferred past structures $\Lambda$. We also hope to be able to generalize the ansatz \eqref{NewdAl} so that the dimension of spacetime itself is not an input (as in \cite{BDD10,DG13,Glaser_2014}), but an emergent quantity. 

The quantization scheme we have proposed works for a very general class of interactions and can be used to test ideas of both causal set theory and pAQFT in new ways. In particular, one can study how approximating the continuum works for interacting theories. One can also investigate how our method of introducing interactions using pAQFT framework relates to the more traditional approach using path integrals. On finite causal sets both approaches can be studied by numerical as well as analytical methods, which is typically not the case in continuum QFT.

\begin{acknowledgments}
	We thank William Cunningham, Fay Dowker, Eli Hawkins, Lisa Glaser, Ian Jubb, Christoph Minz, Rafael Sorkin, Sumati Surya and Stav Zalel for very useful discussions about causal sets! We also thank Ted Jacobson and Lee Smolin for useful comments on the manuscript and for pointing out some relevant references. K.R. would like to acknowledge the financial support of EPSRC (through the grant \verb|EP/P021204/1|) and would like to thank the Perimeter Institute for hospitality and ongoing support. The work of E.D.-H. and N.W. was partly supported by summer studentships from the Department of Mathematics, University of York and E.D.-H. also received support from the 2016 EPSRC DTP funds. 
\end{acknowledgments} 

\appendix
\section{Interacting star product in terms of graphs}
For the convenience of the reader, in this section we summarize the results of \cite{HR} concerning formulas for $\star_{H,\rm int}$ and $R_{{\lambda} V}$ in terms of graphs.
\begin{df}
	Let $\Gcal(n)$ denote the set of directed graphs with $n$ vertices labelled $1,\dots, n$ (and possibly unlabelled vertices with valency $\geq1$) and
	\be
	\Gcal \doteq \bigcup_{n\in\NN} \Gcal(n) \,.
	\ee
	For $\gamma\in\Gcal$: $e(\gamma)$ is the number of edges; $v(\gamma)$ is the number of unlabelled vertices; $\Aut(\gamma)$ is the group of automorphisms.
\end{df}

\begin{df}
	A graph $\gamma\in\Gcal(n)$ determines an $n$-ary multidifferential operator, $\vec\gamma$, on functionals as follows: 
	\begin{itemize}[nosep]
		\item
		An edge represents $E^-(x,y)$ with the direction from $y$ to $x$ --- i.e., such that this is only nonvanishing when the edge points from the future to the past;
		\item
		if the labelled vertex $j$ has valency $r$, this represents the order $r$ functional derivative of the $j$'th argument;
		\item
		likewise, an unlabelled vertex of valency $r$ represents $-V^{(r)}$.
	\end{itemize}
\end{df}

\begin{df}
	$\Gcal_3(n)\subset\Gcal(n)$ is the set of  graphs such that:
	\begin{itemize}[nosep]
		\item
		Every unlabelled vertex has at least one ingoing edge and one outgoing edge;
		\item
		there are no directed cycles;
		\item
		for $1\leq j<k\leq n$, there does not exist any directed path from $j$ to $k$.
	\end{itemize}
\end{df}
In particular, this implies that $1$ is a sink (has only ingoing edges) and $n$ is a source (has only outgoing edges).

Let $\starbint$ be the product on the space of observables $\Fcal(\Ccal)[[\hbar]]$, defined analogously to \eqref{starint}, but where $\T$ is replaced by $\cdot$ and $\star$ is replaced by $\starb$ given by
\be
(F\starb G)(\ph) = \sum_{n=0}^\infty \frac{(-i\hbar)^n}{n!} \left<F^{(n)}(\ph),(E^-)^{\otimes n} G^{(n)}(\ph)\right>\,.
\ee
We can think of it as the version of the interacting star product where the identification between classical and quantum observables (i.e. normal ordering) has been done using the $\Tcal$ map, rather than $\alpha_{H}^{-1}$. With the notation above, in \cite{HR} it was shown that: 
\begin{thm}
	\label{Interacting product}
	\be
	\label{graph star V}
	F\starbint G = \sum_{\gamma\in\Gcal_3(2)}\frac{(-i\hbar)^{e(\gamma)-v(\gamma)}(-\la)^{v(\gamma)}}{\abs{\Aut(\gamma)}} \vec{\gamma}(F,G)
	\ee
\end{thm}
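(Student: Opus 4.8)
The plan is to treat the identity \eqref{graph star V} as a purely algebraic and combinatorial statement about formal power series in $\hbar$ and $\la$, with the matrices $E^\pm$ playing exactly the role that the propagator distributions play in the continuum. Since $F\starbint G = R^{-1}_{\la V}(R_{\la V}(F)\starb R_{\la V}(G))$ (with the Møller operator taken in the $\TT$-normal-ordered form that underlies $\starb$, i.e.\ obtained from $R_{\la V}$ by replacing $\T$ with the pointwise product and $\star_H$ with $\starb$), the first step is to obtain a graph expansion for this Møller operator and for its inverse. Expanding both the exponential $\Scal(\la V)=e^{\frac{i}{\hbar}\la V}$ and the $\starb$ contractions, I would represent each term as a directed graph: the labelled root carries the functional derivatives of the argument, each unlabelled vertex of valency $r$ carries a factor $-\la\,V^{(r)}$, and each edge carries the advanced propagator $E^-$. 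The retarded character of the construction, encoded through $E^+=P^{-1}K$ and $E^-=(E^+)^T$, together with the acyclicity axiom \eqref{axioms2} of the causal set, is what guarantees that every surviving edge $E^-_{xy}$ is nonvanishing only for $x\preceq y$, i.e.\ points from future to past.

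Next I would compose the two Møller-expanded structures by a single application of $\starb$, which joins the root of $R_{\la V}(F)$ to the root of $R_{\la V}(G)$ by further $E^-$ edges, producing two-rooted graphs on the labelled vertices $1$ (for $G$) and $2$ (for $F$) together with the interaction vertices. The decisive step is then to apply $R^{-1}_{\la V}$ and to show that it cancels precisely the unwanted contributions: all disconnected vacuum subgraphs, all graphs containing a directed cycle, and all graphs possessing a directed path from the sink vertex $1$ to the source vertex $2$. Acyclicity of the causal order rules out directed cycles automatically, and the support properties of the retarded expansion enforce the ordering constraint; what remains is exactly the set $\Gcal_3(2)$, in which every unlabelled vertex has at least one incoming and one outgoing edge, there are no directed cycles, and there is no directed path from $1$ to $2$.

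Finally I would carry out the weight bookkeeping. Each edge contributes a factor $-i\hbar$ from the coefficient $(-i\hbar)^n/n!$ in the definition of $\starb$, each unlabelled vertex contributes $-\la$ from $-\la V$, and the explicit powers $i/\hbar$ and $\la$ carried by $\Scal(\la V)$ conspire so that one power of $\hbar$ is absorbed at each interaction vertex, leaving the net weight $(-i\hbar)^{e(\gamma)-v(\gamma)}(-\la)^{v(\gamma)}$. The symmetry factor $1/\abs{\Aut(\gamma)}$ arises because the sum over the ways of attaching edges and over labellings of the $v(\gamma)$ interaction vertices overcounts each isomorphism class by exactly $\abs{\Aut(\gamma)}$, with the $1/n!$ factors in the exponential expansions supplying the remaining cancellation.

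I expect the main obstacle to be this last reconciliation: proving simultaneously that the cancellations induced by $R^{-1}_{\la V}$ remove precisely the complement of $\Gcal_3(2)$ (no more and no less) and that the residual multiplicities collapse cleanly to $1/\abs{\Aut(\gamma)}$ with no spurious factors. This is exactly the content established in \cite{HR}; since the argument given there is algebraic and uses no property of the continuum propagators beyond the relation $E=E^--E^+$ and the directedness of $E^-$, it transfers verbatim to the causal-set setting with integrals replaced by finite matrix sums.
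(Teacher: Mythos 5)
Your top-level strategy coincides with what the paper actually does: the paper offers no independent proof of this theorem, but presents it as a summary of results of \cite{HR}, justified by the remark in the main text that those results are algebraic in nature and therefore apply to causal sets with integrals replaced by finite sums. Your closing paragraph is precisely this argument, and your weight bookkeeping (each edge contributing $-i\hbar$, each internal vertex contributing $-\la$ together with the absorption of one factor $(-i\hbar)^{-1}$, and $1/\abs{\Aut(\gamma)}$ arising from the $1/n!$'s) is consistent with \eqref{graph star V}.

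There is, however, a genuine flaw in the mechanism you propose for the ``decisive step''. You assert that acyclicity of the causal order rules out directed cycles automatically, and that support properties of the retarded expansion enforce the absence of directed paths from vertex $1$ to vertex $2$. Both claims fail on a causal set. The discrete Green operators generically have nonvanishing diagonal entries: for the regular diamond lattice with $K_\Lambda=\tfrac{1}{2}\1$ the paper computes $E^+_\Lambda=\tfrac{1}{2}(\1+C)$, so $E^-_{pp}=\tfrac{1}{2}$. A directed cycle of $E^-$ edges is therefore forced by acyclicity onto coincident points but does \emph{not} vanish there --- e.g.\ $\sum_{p,q}E^-_{pq}E^-_{qp}=\sum_p (E^-_{pp})^2\neq 0$ --- in contrast to the continuum, where the coincidence set has measure zero. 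Similarly, a directed path from $1$ to $2$ merely requires the point attached to $2$ to lie in the causal past of the point attached to $1$, which the causal order permits, so no support argument excludes such terms. The graphs outside $\Gcal_3(2)$ are in fact absent for purely combinatorial reasons: in any iterated $\starb$ product the exponential-product contraction pattern directs every edge from the right factor to the left factor, so cycles, and paths from the $F$-vertex to the $G$-vertex, can never be produced, while the remaining conditions (each unlabelled vertex having both ingoing and outgoing edges, no vacuum components) come from cancellations against the inverse M{\o}ller factors. That is the actual content of the lemmas in \cite{HR}, and it is exactly because those lemmas nowhere use the support (``directedness'') of $E^-$ that the transfer to causal sets is legitimate; had the continuum argument relied on the support-based vanishing you invoke, the theorem you are asked to prove would not follow in the discrete setting at all.
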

Next, we give the formula for the interacting star product in the more standard formulation, where classical observables are identified with quantum ones by means of normal ordering (i.e. by applying the map $\alpha^{-1}_H$, defined in section \ref{sec:free})
\begin{df}
	$\Gcal_6(n)$ is the set of (isomorphism classes\footnote{The definition of isomorphism classes is a bit more involved technically and not essential for the current application. We refer the reader to \cite{HR} for details.}) of graphs with directed and undirected edges and labelled vertices $1,\dots,n$ such that:
	\begin{itemize}[nosep]
		\item
		Each unlabelled vertex is at least 3-valent, with at least one ingoing and one outgoing edge;
		\item
		there exist no directed cycles;
		\item
		for $1\leq j<k \leq n$, there does not exist a directed path from $j$ to $k$.
	\end{itemize}
\end{df}
\begin{df}
	A graph $\gamma\in \Gcal_6(n)$ defines an $n$-ary multidifferential operator, $\acts{\gamma}$, as follows:
	\begin{itemize}[nosep]
		\item
		A directed edge represents $E_{{\lambda} V}^{-}$;
		\item
		an undirected edge represents $\Delta^{\mathrm F}$;
		\item
		the vertex $j$ represents a derivative of the $j$'th argument;
		\item
		an unlabelled vertex represents a derivative of $-S$, the total action (equivalently, this is just the derivative of $-V$, since unlabelled vertices are at least 3-valent).
	\end{itemize}
\end{df}
\begin{df}
	$\Gcal_7(n)\subset\Gcal_6(n)$ is the subset of graphs with no loops at labelled vertices (i.e., no edge begins and ends at the same labelled vertex).
	$\Gcal_8(n)\subset\Gcal_6(n)$ is the subset of graphs with no loops.
\end{df}

\begin{thm}\label{non:pert:H}
	\be
	\label{HV product}
	F \starHint G = \sum_{\gamma\in \Gcal_7(2)} \frac{(-i)^{v(\gamma)+d(\gamma)}\hbar^{e(\gamma)-v(\gamma)}}{\abs{\Aut\gamma}} \acts{\gamma}(F,G) 
	\ee
	where $d(\gamma)$ is the number of directed edges. In particular, this is a finite sum at each order in $\hbar$.
\end{thm}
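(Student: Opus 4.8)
The plan is to obtain Theorem~\ref{non:pert:H} by transcribing the corresponding result of \cite{HR} into the causal set setting, exploiting the fact that its proof is purely algebraic. First I would fix the dictionary: every ingredient entering the definition $F\starHint G = R_{\la V}^{-1}(R_{\la V}(F)\star_H R_{\la V}(G))$---the free product $\star_H$, the time-ordered product $\T$, the retarded M{\o}ller operator $R_{\la V}$ and its inverse, together with the building blocks $E^\pm$, $\Delta^{\rm F}=\frac{i}{2}(E^++E^-)+H$ and the derivatives $V^{(r)}$---is the exact finite-dimensional analogue of the corresponding continuum object in \cite{HR}, with integrals against $d\mu_g$ replaced by finite sums over $\Ccal$ and distributional kernels replaced by the matrices $E^\pm_{ij}$, $(\Delta^{\rm F})_{ij}$, $V_{,i_1\dots i_r}$. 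Since none of the manipulations in \cite{HR} rely on continuum-specific features (regularity, wavefront sets, or existence of the adiabatic limit), the argument carries over verbatim once this dictionary is in place.

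Second, I would reconstruct the graph expansion directly from the definition. Expanding each M{\o}ller operator as a formal power series in $\la$ and $\hbar$ produces, for each factor, a sum of graphs in which the labelled vertex is attached through retarded/advanced edges (carrying $E^\pm$) to unlabelled vertices, each representing a derivative $-V^{(r)}$; the open retarded chains resum into the \emph{interacting} advanced Green function via the Neumann series $E^-_{\la V}=E^-+\sum_{n\ge1}\la^n E^-(V^{(2)}E^-)^n$ read off the linearised interacting equation \eqref{eq:inter:linear}. The free product $\star_H$ then inserts additional edges carrying $W=\frac{i}{2}E+H$; splitting $W$ into its $E$-part and its $H$-part and combining with the edges already present reorganises the edges into the two species in the statement, namely directed edges carrying $E^-_{\la V}$ and undirected edges carrying $\Delta^{\rm F}$. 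This is the same mechanism that produces Theorem~\ref{Interacting product} for $\starbint$, the only difference being that there the classical/quantum identification is performed with $\TT$ rather than with $\alpha_{H}^{-1}$, which accounts for the appearance of the $\Delta^{\rm F}$-edges and hence of undirected edges in the present statement.

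Third, I would carry out the cancellations induced by $R_{\la V}^{-1}$. Graphs containing a directed path from a lower- to a higher-numbered labelled vertex, directed cycles, or loops at labelled vertices either cancel against contributions of $R_{\la V}^{-1}$ or vanish by the retarded/advanced support properties, leaving precisely the admissible graphs of $\Gcal_7(2)$. Finally, the symmetry factors $\abs{\Aut\gamma}^{-1}$ arise in the standard way: the $1/n!$ factors from the exponentials defining $\star_H$, $\T$ and $\Scal(\la V)$, together with the indistinguishability of the unlabelled $V$-vertices, overcount each isomorphism class by exactly $\abs{\Aut\gamma}$, while the sign and power of $\hbar$ are bookkept by counting directed edges $d(\gamma)$, total edges $e(\gamma)$ and unlabelled vertices $v(\gamma)$.

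The main obstacle is the combinatorial bookkeeping in this last step: one must verify that after the cancellations the surviving graphs are exactly $\Gcal_7(2)$ and that the automorphism factors emerge with the correct multiplicity, in particular that the split of $W$ and the resummation into $E^-_{\la V}$ do not disturb the symmetry-factor count. Since \cite{HR} establishes precisely this combinatorial identity at the level of formal algebra, it transfers directly; finiteness at each order in $\hbar$ then follows because fixing the power $e(\gamma)-v(\gamma)$ bounds the number of vertices, each unlabelled vertex being at least $3$-valent, so that only finitely many graphs contribute.
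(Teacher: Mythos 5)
Your proposal takes essentially the same approach as the paper: the paper gives no independent proof of Theorem~\ref{non:pert:H}, but simply quotes it from \cite{HR}, justifying the transfer with the remark that the derivation there is purely algebraic and therefore applies verbatim in the finite-dimensional causal set setting---which is exactly your opening (and decisive) move. Your further sketch of the internal mechanics of \cite{HR} (the dictionary replacing integrals by sums and kernels by matrices, the Neumann resummation into $E^-_{\la V}$, the cancellation down to $\Gcal_7(2)$, and finiteness at each order in $\hbar$ from the $3$-valence of unlabelled vertices) is consistent with that reference, so the proposal is correct and matches the paper's treatment.
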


Finally, we give also the explicit formulas for the retarded M{\o}ller map itself.
\begin{df}
	$\Gcal_{11}(1)$ is the set of isomorphism classes of graphs with directed and undirected edges and a labelled vertex 1, such that
	\begin{itemize}[nosep]
		\item
		Every unlabelled vertex has at least one incoming edge;
		\item
		1 is a source;
		\item
		there are no directed cycles;
		\item
		there are no loops.
	\end{itemize} 
\end{df}
With this, for the given interaction $V$, the interacting observable corresponding to $F$ is 
\be
R_{\la V}(F) = \sum_{\gamma\in\Gcal_{11}(1)} \frac{(-i)^{d(\gamma)-v(\gamma)}(-\la)^{v(\gamma)}\hbar^{e(\gamma)-v(\gamma)}}{\abs{\Aut(\gamma)}} \vec{\gamma}(F)
\ee
where undirected edges represent $\Delta^{\mathrm F}$, unlabelled vertices correspond to derivatives of $-{\lambda} V$ and $d(\gamma)$ is the number of directed edges. As before, we can also give a non-perturbative (in ${\lambda}$) formula, where we sum up the contributions containing $E^{-}$ to obtain an expression that depends only on the full interacting Green function $E^{-}_{{\lambda} V}$.

\begin{df}
	$\Gcal_{12}(1)\subset \Gcal_{11}(1)$ is the subset of graphs such that no unlabelled vertex has one incoming edge, one outgoing edge, and no unlabelled edge.
\end{df}
Any graph in $\Gcal_{11}(1)$ can be obtained by adding vertices along directed edges of a graph in $\Gcal_{12}(1)$. In this way, the formula for the M{\o}ller map can be re-expressed as
\be
R_{H,\la V}(F) = \sum_{\gamma\in\Gcal_{12}(1)} \frac{(-i)^{v(\gamma)+d(\gamma)}\hbar^{e(\gamma)-v(\gamma)}}{\abs{\Aut(\gamma)}} \acts{\gamma}(F)\,,
\ee
where directed edges represent $E^{-}_{{\lambda} V}$, undirected edges represent $\Delta^{\mathrm F}$, and unlabelled vertices represent derivatives of $-\la V$.

\end{document}